\newcommand{\no}[1]{ \left| \! \left| #1 \right| \! \right| }
\newcommand{\abs}[1]{\left|#1\right|}                                
\title[Unique continuation and the HK theorem II]{Unique continuation for \\ many-body Schr\"odinger operators \\ and the Hohenberg-Kohn theorem.
\\ II. The Pauli Hamiltonian}
\author[L. Garrigue]{Louis Garrigue}
\address{CEREMADE, Universit\'e Paris-Dauphine, PSL Research University, F-75016 Paris, France} 
\email{garrigue@ceremade.dauphine.fr}
\begin{document}
\begin{abstract} 
	We prove the strong unique continuation property for many-body Pauli operators with external potentials, interaction potentials and magnetic fields in $L^p\loc(\R^d)$, and with magnetic potentials in ${L^{q}\loc(\R^d)}$, where ${p > \max(2d/3,2)}$ and ${q > 2d}$. For this purpose, we prove a singular Carleman estimate involving fractional Laplacian operators.
\end{abstract}
\date{\today}

\maketitle
Density Funtional Theory (DFT) is the most successful approach to model matter at atomic and molecular scales. It is extensively employed to probe microscopic quantum mechanical systems, in very diverse situations. The one-body density of matter is the main object of interest in this framework. Indeed, a statement by Hohenberg and Kohn~\cite{HohKoh64}, lying at the heart of the theory, proves that, at equilibrium, the density contains all the information of the system. Later, Lieb~\cite{Lieb83b} showed that the rigorous proof of the Hohenberg-Kohn theorem relies on a strong unique continuation property (UCP).

Unique continuation is an important and versatile tool in analysis. In particular, it is used to prove uniqueness of Cauchy problems, see~\cite{Tataru04} for a review of some results. Unique continuation mainly relies on Carleman inequalities, first developed by Carleman~\cite{Carleman39}, later improved by H\"ormander~\cite{Hormander83} and Koch and Tataru~\cite{KocTat01}. It implies that, under general assumptions, a function verifying a second order partial differential equation and vanishing \apo{strongly} at one point vanishes everywhere. A famous result of this kind is due to Jerison and Kenig~\cite{JerKen85}, who dealt with eigenfunctions of Schr\"odinger's operator $-\Delta + V(x)$ where $V \in L^{n/2}\loc(\R^{n})$.

Nevertheless, most of the existing results fail to apply to situations that are relevant in many-body quantum physics, because their assumptions on potentials, which are generally $L^p$ conditions, depend on the number of particles $N$. For instance if we want to use the result of Jerison and Kenig, we need the electric potential to belong to $L^{dN/2}(\R^d)$, which is very restrictive when $N$ is large. The only two adapted works, having $N$-independent assumptions on the potentials, are the ones of Georgescu~\cite{Georgescu80} and Schechter-Simon~\cite{SchSim80}. But they hold only in a weak version, where it is assumed that the function vanishes in an open set. We also mention~\cite{Zhou12,Lammert18,Zhou19} on this subject, and finally~\cite{KinSha10} which goal is reached in this work.

This paper is a continuation of a previous article~\cite{Garrigue18}, where we showed the strong UCP for the many-body Schr\"odinger operator having external and interaction potentials. In this document as well, we replace $L^p$ conditions on potentials by relative boundedness with respect to the Laplacian, which is a classical assumption used in the analysis of Schr\"odinger operators. This enables us to extend our previous result~\cite{Garrigue18} to the important case of magnetic fields. Our proof relies on a Carleman inequality involving fractional Laplacians, which we prove using well-known techniques developed by H{\"o}rmander in~\cite{HormanderII83}
, further used by Koch and Tataru in~\cite{KocTat01}, and by R\"uland in~\cite{Ruland18}. This inequality pairs very naturally with Sobolev multipliers assumptions on the external potentials, which are independent of the number of particles. One of the difficulties with strong UCP results is that they need to use Carleman inequalities with singular weights. They are more delicate to show than for regular weights, because G\aa rding's inequality cannot be applied. We refer to~\cite{RouLeb12} for more details on Carleman estimates with regular weights. 

There are many works concerning unique continuation for Schr\"odinger operators with magnetic fields in the case of one-particle systems~\cite{Wolff90,Wolff93,Regbaoui97b,Regbaoui97,KocTat01,Davey14,ArrZub15}, using Carleman estimates. Another way of proving strong UCP results relies on techniques developed by Garofalo and Lin~\cite{GarLin86,GarLin87} which do not employ Carleman estimates but Almgren's monotonicity formula~\cite{Almgren79}. This was used by Kurata in~\cite{Kurata97}
to show strong UCP results for one-particle magnetic Schr\"odinger operators. Recently, Laestadius, Benedicks and Penz~\cite{LaeBenPen17} proved the first strong UCP result for many-body magnetic Schr\"odinger operators, using the work of Kurata. However, they need extra assumptions on $\pa{2V + x \cdot V}_-$ and $\rot A$, and a result with only $L^p$ hypothesis on potentials was lacking. 

The dimension of space being $d$, we can deal with external and interaction potentials as well as magnetic fields in $L^p\loc(\R^d)$ and magnetic potentials in $L^q\loc(\R^d)$, where
\begin{equation*}
\left\{
\begin{array}{ll}
  p > \max \pa{\frac{2d}{3},2}, \\
  q > 2d.
\end{array}
\right.
\end{equation*}
Our assumptions are independent of the number of particles $N$ and can treat the singular potentials involved in physics like the Coulomb one. Following Simon in \cite[Section C.9]{Simon82} and in light of~\cite{JerKen85,Wolff93,KocTat01}, we conjecture that the same results hold for $p = d/2$ if $d \ge 3$, $p > 1$ if $d= 2$ and $p=1$ if $d=1$, and $q = 2p$ for any dimension $d$. We tried to adapt the approach of~\cite{KocTat01} to the $N$-body setting, but did not manage to do so. We hope our work will stimulate further results in this direction. 

We also prove the strong UCP for the Pauli operator, which can be seen as an operator-valued matrix and thus belongs to the category of UCP results for systems of equations. Our result implies the Hohenberg-Kohn theorem in presence of a fixed magnetic field.

In order to take into account photons in a DFT context, Ruggenthaler and coworkers~\cite{RugFliPel14,Ruggenthaler15} considered the Pauli-Fierz operator together with a corresponding model where light and electrons are quantized, stating an adapted Hohenberg-Kohn theorem and calling the resulting theory QEDFT. Maxwell-Schr\"odinger theory is a variation of this hybrid model, in which photons are treated semi-classically through an internal self-generated magnetic potential $a$. Tellgren studied this model in~\cite{Tellgren18} within DFT and baptized the resulting framework Maxwell DFT. In a model describing external magnetic fields but not internal ones~\cite{VigRas87,VigRas88}, a generalization of the Hohenberg-Kohn theorem does not hold, counterexamples were provided in~\cite{CapVig02}. In DFT, an important problem has been to find a model bringing back this property~\cite{Diener91,PanSah10,VigUllCap13,PanSah14,TaoPanSah11,TelKvaSag12,LaeBen14,Laestadius14,PanSah15,TelLaeHel18}. The models containing internal magnetic potentials do so, as explained in~\cite{Ruggenthaler15,Tellgren18} and in this work, and our strong UCP result enables us to rigorously prove the Hohenberg-Kohn theorem in the Maxwell-Schr\"odinger model. Thus in this setting the one-body density $\ro$ and internal current $j + \rot m + \ro a$ of the ground state contain all the information of the system, that is the knowledge of the external classical electromagnetic field.

\subsection*{Acknowledgement}
I warmly thank Mathieu Lewin, my PhD advisor, for having supervized me during this work. This project has received funding from the European Research Council (ERC) under the European Union's Horizon 2020 research and innovation programme (grant agreement MDFT No 725528). This article was published under DOI 10.4171/DM/765, Documenta Mathematica, vol. 25, p. 869-898 (2020), and the journal version can be found at the address
\begin{center}
	\url{https://ems.press/journals/dm/articles/8965696}
\end{center}

\section{Main results}

Because it is of independent interest, we start by explaining the Carleman estimate which is the main tool of our approach.

\subsection{Carleman estimates for singular weights} \tx{ }

We denote by $B_R$ the ball of radius $R$ centered at the origin in $\R^n$, for $n \ge 1$. The first step in our study consists in a Carleman inequality obtained by standard techniques.

\begin{theorem}[Carleman inequality]\label{carl}
Let $0 < \alpha \le 1/2$, and let us define $\phi(x) \df -\ln \ab{x} + (-\ln \ab{x})^{-\alpha}$ for $\ab{x} \le 1/2$. In dimension $n$, there exist constants $c_{n}$ and $\tau_n \ge 1$ such that for any $\tau \ge \tau_n$ and any ${u \in C^{\ii}_{\tx{c}}(B_{1/2} \backslash \acs{0},\C)}$, we have
\begin{multline}\label{prem}
		 \tau^3 \int_{B_{1/2}} \f{\ab{e^{(\tau+2)\phi} u}^2}{\bigpa{\ln\ab{x}^{-1}}^{2+\alpha}} + \tau \int_{B_{1/2}} \f{\ab{e^{(\tau+1) \phi} \na u}^2}{\bigpa{\ln\ab{x}^{-1}}^{2+\alpha}}+ \tau \int_{B_{1/2}} \f{\ab{\na \pa{e^{(\tau+1) \phi} u}}^2}{\bigpa{\ln\ab{x}^{-1}}^{2+\alpha}} \\
		 +\tau^{-1} \int_{B_{1/2}} \f{ \ab{\Delta \bigpa{ e^{\tau\phi} u}}^2}{ \big( \ln\ab{x}^{-1} \big)^{2+\alpha}} \le \f{c_n}{\alpha} \int_{B_{1/2}} \ab{e^{\tau \phi} \Delta u}^2.
\end{multline}
\end{theorem}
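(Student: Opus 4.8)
The plan is to follow the classical H\"ormander scheme for Carleman estimates, made to work at the borderline logarithmic weight thanks to the convexifying perturbation $(-\ln\ab{x})^{-\alpha}$ together with a conformal change of variables. First I would set $v \df e^{\tau\phi}u$ and conjugate, so that the right-hand side of \eqref{prem} is $\no{P_\tau v}_{L^2(B_{1/2})}^2$ with $P_\tau \df e^{\tau\phi}\Delta e^{-\tau\phi} = \Delta - 2\tau\na\phi\cdot\na + \tau^2\ab{\na\phi}^2 - \tau\Delta\phi$. Splitting $P_\tau = S + A$ into its symmetric part $S = \Delta + \tau^2\ab{\na\phi}^2$ and antisymmetric part $A = -2\tau\na\phi\cdot\na - \tau\Delta\phi$ and expanding the square yields the identity
\begin{equation*}
\no{P_\tau v}^2 = \no{Sv}^2 + \no{Av}^2 + \Re\langle [S,A]v, v\rangle .
\end{equation*}
Everything then reduces to showing that $\Re\langle[S,A]v,v\rangle$, together with small fractions of $\no{Sv}^2$ and $\no{Av}^2$, dominates the four quantities on the left of \eqref{prem}.

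Second, I would pass to cylindrical coordinates $x = e^{-t}\omega$, $t = -\ln\ab{x}\ge\ln 2$, $\omega\in S^{n-1}$, under which $\ab{x}^2\Delta = \partial_t^2 - (n-2)\partial_t + \Delta_\omega$ and $\phi$ becomes the one-dimensional weight $\psi(t)\df t + t^{-\alpha}$, satisfying $\psi'(t) = 1 - \alpha t^{-\alpha-1} > 0$ on the relevant range and, crucially, the strict convexity $\psi''(t) = \alpha(\alpha+1)t^{-\alpha-2} > 0$, which the logarithm alone fails to provide. This is the step where the conformal change is essential and not merely cosmetic: the radial weight $\phi$ has a Hessian which is negative in the tangential directions, whereas after the change only $\psi''>0$ is needed. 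Absorbing the Jacobian $e^{-nt}\,dt\,d\omega$ and the factor $\ab{x}^{-2}$ into the exponential weight — which is the source of the harmless shifts $\tau\mapsto\tau+1,\tau+2$ between the terms of \eqref{prem}, since $e^{\phi}\sim\ab{x}^{-1}$ up to a bounded correction — one is left with a Carleman estimate for an operator with $t$-independent coefficients; since $\Delta_\omega$ commutes with every $t$-dependent factor, one may, if convenient, diagonalise it in spherical harmonics and reduce to a family of one-dimensional estimates with nonnegative potentials, uniformly in the harmonic.

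Third, one computes the commutator explicitly. After the reduction it takes the form $[S,A] = -4\tau\psi''\partial_t^2 - 4\tau\psi'''\partial_t - \tau\psi'''' + 4\tau^3(\psi')^2\psi''$, whence after one integration by parts
\begin{equation*}
\Re\langle [S,A]v,v\rangle = 4\tau\int \psi''\ab{\partial_t v}^2 + 4\tau^3\int(\psi')^2\psi''\ab{v}^2 - \tau\int\psi''''\ab{v}^2 .
\end{equation*}
Since $\psi''\sim\alpha\,(\ln\ab{x}^{-1})^{-2-\alpha}$ and $\psi''''\sim(\ln\ab{x}^{-1})^{-4-\alpha}$, the last term is absorbed by the $\tau^3$-term for $\tau$ large (using that $t\ge\ln 2$ is bounded below), and one obtains the two leading terms of \eqref{prem} with overall constant proportional to $\alpha$, which is precisely the origin of the factor $c_n/\alpha$ on the right-hand side. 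The remaining two quantities of \eqref{prem}, the full gradient $\na(e^{(\tau+1)\phi}u)$ and the Laplacian $\Delta(e^{\tau\phi}u) = \Delta v$, are recovered by elliptic bootstrapping: $\no{\Delta v}^2$ is controlled by $\no{Sv}^2$ (kept on the left, it dominates $\tau^{-1}\no{\Delta v}^2$ since the weight is bounded) plus $\tau^4\no{\,\ab{\na\phi}^2 v}^2$, and the weighted full gradient is controlled by the $L^2$-term and $\no{Sv}^2$, $\no{Av}^2$ via $\int w\ab{\na v}^2 = -\int w\bar v\,\Delta v - \int\bar v\,\na w\cdot\na v$ with $\ab{\na w}^2/w$ dominated by the weight appearing in the $\tau^3$-term; likewise $e^{(\tau+1)\phi}\na u = e^{\phi}(\na v - \tau\na\phi\,v)$ is handled by the gradient and $L^2$ terms. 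Choosing $\tau$ larger than a dimensional constant $\tau_n$ makes all error terms absorbable.

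The main obstacle is the one flagged in the introduction: because the weight is singular at the origin, G\aa rding's inequality is unavailable, so the positivity of the commutator form and the absorption of each error term must be carried out by hand through explicit, carefully bookkept integrations by parts with singular coefficients. Concretely, one must check that the perturbation $t^{-\alpha}$ is strong enough that $\psi''>0$ beats the borderline terms, yet weak enough — which forces $\alpha\le 1/2$ and the restriction to $\ab{x}\le 1/2$ — that $\psi'$ stays positive and that the errors it generates (the $\psi'''$, $\psi''''$ contributions and the cross terms) decay no slower than the weight $(\ln\ab{x}^{-1})^{-2-\alpha}$ and are thereby swallowed by the main positive term.
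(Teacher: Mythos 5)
Your proposal follows essentially the same route as the paper: conjugation of $\ab{x}^2\Delta$ by $e^{\tau\phi}$, the symmetric/antisymmetric splitting $P=S+A$, passage to log-polar coordinates where the perturbed weight $t+t^{-\alpha}$ is strictly convex, positivity of the integrated commutator yielding the $\tau^3$- and radial-gradient terms with the $1/\alpha$ loss from $\vp''\ge \alpha(-t)^{-2-\alpha}$, and absorption of the lower-order commutator contributions for large $\tau$. The only (harmless) deviations are in how the remaining terms are recovered — you extract $\tau^{-1}\no{\Delta(e^{\tau\phi}u)}^2$ from $\no{Sv}^2$ and the full gradient by interpolation, whereas the paper gets the angular gradient from $\ps{\vp''v,Sv}$ and the Laplacian term by pointwise expansion of $\Delta(e^{\tau\phi}u)$ using the bounds $\ab{x}\ab{\na\phi}\le 4$ and $\ab{x}^2\ab{\Delta\phi}\le n+4$ — both of which are standard and correct.
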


Those are variants of known Carleman inequalities. The proof, given in Section \ref{sectioncarl}, follows from a rather standard reasoning. With $\phi$ a smooth pseudo-convex function, the classical Carleman estimate for \textit{regular} weights is
\begin{align}\label{classical}
\tau^{3} \nor{ e^{(\tau +2) \phi} u}{L^2}^2 + \tau \nor{e^{(\tau +1) \phi} \na u}{L^2}^2 + \tau^{-1} \nor{\Delta\pa{e^{\tau \phi} u}}{L^2}^2 \le c_{n}  \nor{e^{\tau \phi} \Delta u}{L^2}^2,
\end{align}
 for $\tau$ large enough, see~\cite{Tataru04,RouLeb12} for more detail.
 In~\cite{Regbaoui97b}, Regbaoui showed the estimate
\begin{align}\label{regb}
\tau^2 \nor{ \ab{x}^{-(\tau+2)} u}{L^2}^2 + \nor{ \ab{x}^{-(\tau+1)}   \na u}{L^2}^2 \le c_n \nor{\ab{x}^{-\tau} \Delta u}{L^2}^2,
\end{align}
where $\phi = \ln \ab{x}^{-1}$. This holds for $\tau \in \N + \ud$ which is a set preventing some quantity to intersect the spectrum of the Laplace-Beltrami operator on the sphere. The estimate \eqref{regb} is not good enough for us due to the slower increase of the coefficients in $\tau$.
In~\cite{Tataru04}, Tataru also presents a Carleman estimate with singular weights,
\begin{align}\label{class}
\tau^{3} \nor{ e^{(\tau +1) \phi} u}{L^2}^2 + \tau \nor{e^{\tau \phi} \na u}{L^2}^2 \le c_{n}  \nor{e^{\tau \phi} \Delta u}{L^2}^2,
\end{align}
 with 
\begin{align}\label{poids}
e^{\phi(x)} = \bigpa{\ab{x} + \lambda \ab{x}^2}^{-1},  
\end{align}
where $\lambda$ has to be negative for ${\phi}$ to be striclty convex, and where $u$ needs to be supported near the origin. Here the behavior in $\tau$ is optimal but the estimate on $\Delta u$ was not considered in~\cite{Tataru04}.\footnote{The published version of~\cite{Garrigue18} relies on this Carleman inequality. After publication of~\cite{Garrigue18}, we realized that we could not locate in the literature the same estimate on $\Delta\pa{e^{\tau \phi} u}$ as in \eqref{classical} for the weight \eqref{poids}, contrarily to what was stated in \cite[Theorem 1.1]{Garrigue18}. This article solves the problem and the needed \cite[Ineq. (8)]{Garrigue18} follows from Corollary \ref{fraccarl}.}
 Another Carleman estimate with singular weights was proved in~\cite{Ruland18}. It is similar to \eqref{class} and the weight function is
\begin{align*}
\phi(x) = - \ln \ab{x} + \f{1}{10} \pa{ (\ln \ab{x}) \arctan \ln \ab{x} - \ud \ln \pa{1+ (\ln \ab{x})^2}},
\end{align*}
for which $\phi(x) \sim -\pa{1+ \pi/20} \ln \ab{x}$ when $\ab{x} \ra 0^+$.

In our application to many-body Schr\"odinger operators, we needed a Carleman inequality having the best possible powers of $\tau$ outside the integrals, with a weight such that $\phi(x) \sim -\ln \ab{x}$ when $\ab{x} \ra 0^+$, for $e^{\phi}$ to be close enough to $\ab{\cdot}^{-1}$, and with the same powers of $e^{\phi}$ as in the classical estimate \eqref{classical}. Our inequality \eqref{prem} fulfills those requirements. The function $\phi$ in Theorem \ref{carl} respects 
\begin{align*}
\f{1}{\ab{x}} \le e^{\phi(x)} \le \f{e}{\ab{x}}.
\end{align*}
We obtain the same powers of $\tau$ as the classical estimate, and the singularity of the weight is the same as in the regular case, up to some logarithms. Defining $\vp(x) \df (-\ln \ab{x})^{-\alpha}$, the inequality \eqref{prem} can be rewritten as
\begin{multline*}
	\tau^3 \int_{B_{1/2}} \vp^{2(2+\alpha)} \ab{\f{e^{(\tau+2)\vp} u}{\ab{x}^{\tau+2}}}^2+ \tau \int_{B_{1/2}} \vp^{2(2+\alpha)} \ab{\f{e^{(\tau+1) \vp} \na u}{ \ab{x}^{\tau+1} } }^2  \\
		+ \tau \int_{B_{1/2}} \vp^{2(2+\alpha)} \ab{\na \pa{ \f{e^{(\tau+1) \vp} u}{ \ab{x}^{\tau+1} } }}^2  +\tau^{-1} \int_{B_{1/2}} \vp^{2(2+\alpha)} \ab{\Delta \pa{  \f{ e^{\tau\vp} u}{\ab{x}^{\tau}} }}^2  \\
		\le \f{c_n}{\alpha} \int_{B_{1/2}} \ab{ \f{ e^{\tau \vp} \Delta u}{ \ab{x}^{\tau} }}^2.
\end{multline*}

We transform now the inequality \eqref{prem} in a form tailored to be used in a very natural way for many-body operators. 

\begin{corollary}[Fractional Carleman inequality]\label{fraccarl}
	In dimension $n$, for any $\delta \in ]0,1]$ there exist constants $\kappa_n$ and $\tau_0 \ge 1$ such that for any $s \in \seg{0,1}$, ${s' \in \seg{0,\ud}}$, any $\tau \ge \tau_0$ and any $u \in C^{\ii}_{\tx{c}}(B_{1} \backslash \acs{0},\C)$, we have
 \begin{multline}\label{mainineq}
 \tau^{3-4s} \nor{ (-\Delta)^{(1-\delta)s} \pa{e^{\tau \phi} u}}{L^2(\R^n)}^2 + \tau^{1-4s'} \sum_{i=1}^n \nor{ \pa{-\Delta}^{(1-\delta)s'} \pa{e^{\tau \phi} \partial_i u}}{L^2(\R^n)}^2  \\
 \le \f{\kappa_{n}}{\delta^{5/2}}  \nor{e^{\tau \phi} \Delta u}{L^2(B_1)}^2.
 \end{multline}
 \end{corollary}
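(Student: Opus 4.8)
The plan is to deduce \eqref{mainineq} from \eqref{prem} by pure real analysis, exploiting that a fractional Laplacian of order slightly below the one appearing in \eqref{prem} can absorb its logarithmic weights. After fixing $\alpha=\ud$ and rescaling so that $u\in C^{\ii}_{\tx{c}}(B_{1/2}\backslash\acs{0})$, I would write $w\df e^{\tau\phi}u$ and $A\df\nor{e^{\tau\phi}\Delta u}{L^2}^2$, and record three elementary facts, valid on $B_{1/2}$: $e^{\phi}$ is comparable to $\ab{x}^{-1}$; $\ab{x}^{-2}\pa{\ln\ab{x}^{-1}}^{-5/2}$ is bounded below by a positive constant; and $\sup_{0<\ab{x}\le \ud}\ab{x}^{2\sigma}\pa{\ln\ab{x}^{-1}}^{5/2}\le c\,\sigma^{-5/2}$ for $\sigma\in\,]0,1]$ (a one-line calculus exercise, the supremum being at $\ab{x}=e^{-5/(4\sigma)}$). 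Feeding these into \eqref{prem} gives the ``building blocks'' $\tau^3\nor{w}{L^2}^2\le c_nA$, $\tau^3\nor{\ab{x}^{-1}w}{L^2}^2\le c_nA$, $\tau\sum_i\nor{e^{\tau\phi}\partial_iu}{L^2}^2\le c_nA$, $\tau\nor{\na w}{L^2}^2\le c_nA$, and, most importantly, the weighted second order estimate
\[
\nor{\ab{x}^{\sigma}\Delta w}{L^2}^2\le \f{c_n}{\sigma^{5/2}}\,\tau\,A,\qquad \sigma\in\,]0,1],
\]
which is the source of the factor $\delta^{-5/2}$ in \eqref{mainineq}.

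For the scalar term, I would write $(-\Delta)^{1-\sigma}w=-(-\Delta)^{-\sigma}\Delta w=-(-\Delta)^{-\sigma}\pa{\ab{x}^{-2\sigma}\pa{\ab{x}^{2\sigma}\Delta w}}$ and use the Rellich--Hardy (Herbst) inequality in dual form, $\nor{(-\Delta)^{-\sigma}\pa{\ab{x}^{-2\sigma}g}}{L^2}\le c_n\nor{g}{L^2}$ (valid for $2\sigma<n/2$), to turn the weighted estimate above into $\nor{(-\Delta)^{1-\sigma}w}{L^2}^2\le c_n\sigma^{-5/2}\tau A$. Taking $\sigma=\delta$, combining with $\tau^3\nor{w}{L^2}^2\le c_nA$ via the interpolation $\nor{(-\Delta)^{(1-\delta)s}w}{L^2}\le\nor{w}{L^2}^{1-s}\nor{(-\Delta)^{1-\delta}w}{L^2}^{s}$, and using $\tau^{3-4s}=(\tau^3)^{1-s}(\tau^{-1})^{s}$, yields $\tau^{3-4s}\nor{(-\Delta)^{(1-\delta)s}w}{L^2}^2\le c_n\delta^{-5/2}A$, i.e.\ the first term of \eqref{mainineq}.

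For the gradient terms, since $s'\mapsto\tau^{1-4s'}\nor{(-\Delta)^{(1-\delta)s'}(e^{\tau\phi}\partial_iu)}{L^2}^2$ is bounded, by the same H\"older interpolation, by the geometric mean of its values at $s'=0$ and $s'=\ud$, it suffices to treat those. The case $s'=0$ is a building block. For $s'=\ud$ I would use $e^{\tau\phi}\partial_iu=\partial_iw-\tau(\partial_i\phi)w$. The term $\partial_iw$ obeys $\nor{(-\Delta)^{(1-\delta)/2}\partial_iw}{L^2}\le\nor{(-\Delta)^{1-\delta/2}w}{L^2}$ (since $\ab{\xi_i}\le\ab{\xi}$), which is controlled by the previous step with $\sigma=\delta/2$, and multiplying by $\tau^{-1}$ produces the right bound. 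For the singular term $\tau(\partial_i\phi)w$ I would use $\ab{\partial_i\phi}\le\ab{x}^{-1}$ on $B_{1/2}$, split $(-\Delta)^{(1-\delta)/2}\pa{(\partial_i\phi)w}=(\partial_i\phi)(-\Delta)^{(1-\delta)/2}w+[(-\Delta)^{(1-\delta)/2},\partial_i\phi]w$, bound the first summand by a Hardy inequality and the commutator by a lower-order estimate, reducing everything to the building blocks and the scalar estimate; the target is $\tau^2\nor{(-\Delta)^{(1-\delta)/2}\pa{(\partial_i\phi)w}}{L^2}^2\le c_n\delta^{-5/2}\tau A$. Summing over $i$ then gives the second term of \eqref{mainineq}.

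Finally, in low dimensions the Rellich--Hardy constants degenerate unless $\delta$ is bounded below; there one reduces to a fixed small $\delta_0$ through the pointwise bound $\ab{\xi}^{2(1-\delta)s}\le\ab{\xi}^{2(1-\delta_0)s}+1$, absorbing the extra contribution into the $L^2$ building blocks. I expect the main obstacle to be the uniform treatment of the singular first-order term $\tau(\partial_i\phi)w$: since the support of $w$ may touch the origin, no estimate may depend on it, so one must use genuine Hardy-type and commutator inequalities with constants tracked finely enough that, after the $\tau$- and $\delta$-bookkeeping, the exponents in \eqref{mainineq} come out precisely --- in particular so that the singular gradient term is absorbed without losing any power of $\tau$.
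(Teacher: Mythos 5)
Your treatment of the first (scalar) term is correct and is essentially the paper's argument: extract the weighted bound $\||x|^{\sigma}\Delta(e^{\tau\phi}u)\|_{L^2}^2\le c\,\sigma^{-5/2}\tau\|e^{\tau\phi}\Delta u\|_{L^2}^2$ from the fourth term of \eqref{prem} (your route via $\sup_r r^{2\sigma}(\ln r^{-1})^{5/2}\le c\sigma^{-5/2}$ is equivalent to the paper's $e^{-2a\phi}\le c a^{-5/2}(-\ln|x|)^{-5/2}$), convert it with the Herbst/fractional Hardy operator bound $\|(-\Delta)^{-\sigma}|x|^{-2\sigma}\|_{L^2\to L^2}\le 1$, and interpolate in Fourier space against $\tau^{3}\|e^{\tau\phi}u\|^2\lesssim \delta^{-5/2}\|e^{\tau\phi}\Delta u\|^2$. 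The reduction of the gradient term to the endpoints $s'=0$ and $s'=\tfrac12$ is also fine.

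The gap is in your treatment of the endpoint $s'=\tfrac12$, precisely at the point you flag as the main obstacle, and as written the step fails. Writing $e^{\tau\phi}\partial_i u=\partial_i w-\tau(\partial_i\phi)w$ and bounding the leading part of the singular term by Hardy gives $\|(\partial_i\phi)(-\Delta)^{(1-\delta)/2}w\|\le c\|(-\Delta)^{1-\delta/2}w\|\lesssim \delta^{-5/4}\tau^{1/2}\|e^{\tau\phi}\Delta u\|$; multiplying by $\tau$ and squaring yields $c\,\delta^{-5/2}\tau^{3}\|e^{\tau\phi}\Delta u\|^2$, which overshoots your own stated target $c\,\delta^{-5/2}\tau\|e^{\tau\phi}\Delta u\|^2$ by a factor $\tau^{2}$; and the commutator $[(-\Delta)^{(1-\delta)/2},\partial_i\phi]$, with $\partial_i\phi$ singular like $x_i/|x|^{2}$ and $w$ supported up to the origin, is not a standard ``lower-order'' object for which you can invoke an off-the-shelf bound with tracked constants. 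The paper's proof avoids the commutator altogether by applying the fractional Hardy bound \emph{before} expanding: $\|(-\Delta)^{\frac12-\frac a2}(e^{\tau\phi}\partial_j u)\|=\|(-\Delta)^{-\frac a2}|x|^{-a}\,|x|^{a}\nabla(e^{\tau\phi}\partial_j u)\|\le\||x|^{a}\nabla(e^{\tau\phi}\partial_j u)\|$, and only then writes out $\partial_i(e^{\tau\phi}\partial_j u)$ pointwise. The dangerous piece $\tau^{2}(\partial_i\phi)(\partial_j\phi)e^{\tau\phi}u$ then appears as $\tau^{2}|x|^{a-2}e^{\tau\phi}|u|\le c\,\tau^{2}e^{(\tau+2-a)\phi}|u|$, which the $\tau^{3}$-weighted zero-order term of the Carleman estimate controls by $c\,a^{-5/4}\tau^{1/2}\|e^{\tau\phi}\Delta u\|$ --- exactly the power of $\tau$ needed. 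Your argument can be repaired by adopting this order of operations (Hardy first, pointwise expansion second), or by genuinely interpolating the product $(\partial_i\phi)w$ between orders $0$ and $1$ using the weighted bounds on $\||x|^{-2+\sigma}w\|$ and $\||x|^{-1+\sigma}\nabla w\|$; but the Hardy-plus-commutator split as proposed does not close.
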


The function $\phi$ is the same as in Theorem \ref{carl}. The constant $\kappa_{n}$ depends only on the dimension $n$. The proofs of Theorem~\ref{carl} and Corollary~\ref{fraccarl} are provided later in Section~\ref{sectioncarl}.

\subsection{Unique continuation properties} \tx{ }

We state a strong unique continuation property (UCP) result for Schr\"odinger operators involving gradients, in which potentials are Sobolev multipliers. This type of assumption pairs very naturally with the Carleman inequality involving fractional Laplacians \eqref{mainineq}, as we can see from the proof. At the same time, those assumptions will allow us to prove a corresponding result for the many-body Pauli operator.
\begin{theorem}[Strong UCP for systems with gradients]\label{sucp}
Let $\delta >0$ (small), let $\wt{V} \df \pa{V_{\alpha,\beta}}_{1 \le \alpha,\beta \le m}$ be a $m \times m$ matrix of potentials in $L^2_{\rm{loc}}(\er{n},\C)$ and let $\wt{A} \df \pa{A_{\alpha}}_{1 \le \alpha \le m}$ be a list of vector potentials in $L^2_{\rm{loc}}(\er{n},\R^n)$, such that for every ${R >0}$, there exists $c_R \geq 0$ such that 
	\begin{equation}\label{hyps}
\left\{
\begin{array}{rl}
 \indic_{B_R} \abs{V_{\alpha,\beta}}^2 & \leq \ep_{n,m,\delta} (-\Delta)^{\frac{3}{2} - \delta}  + c_R, \\
\indic_{B_R} \abs{A_{\alpha}}^2 & \leq \ep_{n,m,\delta} (-\Delta)^{\frac{1}{2} - \delta} + c_R, \\
	\ab{\ps{ u, -i \indic_{B_R}  A_{\alpha}\cdot \na u}}  & \leq \ep_{n,m} \ps{u, \pa{(-\Delta) + c_R}u}, \hs\hs\forall u \in \cC^{\ii}\ind{c}(\R^n),
\end{array}
\right.
\end{equation}
	in $\R^n$ in the sense of quadratic forms, where $\ep_{n,m,\delta}$ and $\ep_{n,m}$ are small constants depending only on their indices.
	Let $\p \in H_{\rm{loc}}^2(\er{n},\C^{m})$ be a weak solution of the $m \times m$ system
	\begin{align}\label{system}
	 \pa{- \indic_{m \times m} \Delta_{\R^n} + i \wt{A} \cdot \na_{\R^n} + \wt{V}} \p = 0,
 \end{align}
	where $\wt{A} \cdot  \na_{\R^n}$ is the $m \times m$ operator-valued matrix $\diag \pa{A_{\alpha} \cdot \na_{\R^n}}_{1 \le \alpha \le m}$.
	If $\p$ vanishes on a set of positive measure or if it vanishes to infinite order at a point, then ${\p=0}$.
\end{theorem}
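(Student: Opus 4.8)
The plan is to run the standard Carleman-estimate reduction, using Corollary~\ref{fraccarl} as the core inequality and the Sobolev-multiplier bounds \eqref{hyps} to absorb the potential terms. First I would reduce to a statement about a single point, the origin, by translation invariance; then, since vanishing on a set of positive measure implies (by a classical measure-theoretic/Lebesgue-density plus elliptic-regularity argument — $\p\in H^2_{\rm loc}$ is continuous enough after Sobolev embedding to make sense of this) that $\p$ vanishes to infinite order at some point, it suffices to treat the case where $\p$ vanishes to infinite order at $0$. Next I would establish the usual consequence of such estimates: the set $Z$ of points where $\p$ vanishes to infinite order is both closed and open, hence all of $\R^n$ by connectedness; so the whole argument reduces to proving that if $\p$ vanishes to infinite order at $0$ then $\p\equiv0$ near $0$.

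For the local step I would localize: pick a smooth cutoff $\chi$ supported in $B_{1/2}$, equal to $1$ near $0$, set $v=\chi\p$, so that $-\Delta v + \wt{A}\cdot i\na v + \wt{V}v =: g$ where $g$ is supported in the annulus $\{1/4\le|x|\le1/2\}$ and is controlled in $L^2$ by $\p$ on that annulus. Because $\p$ vanishes to infinite order at $0$, the components $v_\alpha$ — after a further regularization to remove the singularity at $0$, replacing $v$ by $v$ times a cutoff away from a tiny ball and passing to the limit — can be inserted into \eqref{mainineq}, componentwise, with suitable choices of the fractional exponents: $s=3/2-\delta$ scaled into $]0,1]$ for the potential term (so that $(-\Delta)^{(1-\delta)s}$ matches $(-\Delta)^{(3-\delta')/2}$ up to relabeling) and $s'=1/2-\delta$ for the magnetic term, so that the left-hand side of \eqref{mainineq} dominates exactly $\tau^{3-4s}\,\|(-\Delta)^{?}(e^{\tau\phi}v)\|^2$ and $\tau^{1-4s'}\sum_i\|(-\Delta)^{?}(e^{\tau\phi}\partial_i v)\|^2$. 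Using the form bounds \eqref{hyps} — crucially with the small constants $\ep_{n,m,\delta}$ chosen to beat $\kappa_n/\delta^{5/2}$ — I would move $\|e^{\tau\phi}\wt{V}v\|^2$ and $\|e^{\tau\phi}\wt{A}\cdot\na v\|^2$ from the right-hand side, where they appear after expanding $\|e^{\tau\phi}\Delta v\|^2=\|e^{\tau\phi}(g-i\wt A\cdot\na v-\wt V v)\|^2$, to the left-hand side, leaving
\[
\tau^{3-4s}\nor{(-\Delta)^{(1-\delta)s}(e^{\tau\phi}v)}{L^2}^2 \le C\nor{e^{\tau\phi}g}{L^2}^2 + C\pa{\text{absorbed lower-order terms}}.
\]
Since $g$ is supported where $|x|\ge1/4$, the right-hand side is bounded by $C e^{2\tau\phi(1/4)}$, i.e. it grows at most like $e^{C\tau}$ for a fixed constant, whereas the left-hand side is bounded below by $\tau^{3-4s}e^{2\tau\phi(r_0)}\|v\|_{L^2(B_{r_0})}^2$ for any small fixed $r_0<1/4$, with $\phi(r_0)>\phi(1/4)$ since $\phi$ is decreasing. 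Letting $\tau\to\infty$ forces $v\equiv0$ on $B_{r_0}$, hence $\p\equiv0$ near $0$, which closes the bootstrap.

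The main obstacle I anticipate is the bookkeeping in the absorption step: one must verify that the \emph{two} competing terms on the left — the $\tau^{3-4s}$ weighted-$H^{2(1-\delta)s}$ term and the $\tau^{1-4s'}$ weighted-$H^{2(1-\delta)s'}$ gradient term — are \emph{simultaneously} strong enough to dominate, respectively, the potential term $\ep\|(-\Delta)^{(3-\delta)/2}(e^{\tau\phi}v)\|^2$ (matched to $s$) and the magnetic terms $\ep\|(-\Delta)^{(1-\delta)/2}(e^{\tau\phi}v)\|^2$ and $\ep\ps{e^{\tau\phi}v,\wt A\cdot\na(e^{\tau\phi}v)}$ (matched to $s'$), uniformly in $\tau\ge\tau_0$, and that commuting $e^{\tau\phi}$ through $\na$ (which produces the extra factor $\tau\na\phi\sim\tau/|x|$, the same order as $(-\Delta)^{1/2}$) does not spoil the matching — this is exactly why the weight $\phi\sim-\ln|x|$ and the fractional powers in \eqref{mainineq} were engineered to fit together, and why the third hypothesis in \eqref{hyps} (the relative form-boundedness of $A\cdot\na$ with respect to $-\Delta$) is stated separately. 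The matrix/system structure of \eqref{system} costs essentially nothing here since $\wt A\cdot\na$ is diagonal and \eqref{hyps} is assumed entrywise, so the componentwise Carleman estimates just get summed over $\alpha$, picking up the $m$-dependence hidden in $\ep_{n,m,\delta}$.
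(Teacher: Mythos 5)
Your proposal follows essentially the same route as the paper: reduce to vanishing to infinite order at a point, localize, feed the localized function into the fractional Carleman inequality of Corollary~\ref{fraccarl}, absorb the $\wt V$ and $\wt A\cdot\na$ terms using the Sobolev-multiplier hypotheses \eqref{hyps} with $\ep_{n,m,\delta}$ small against $\kappa_n\delta^{-5/2}$, and let $\tau\to\infty$ before propagating by the standard open--closed argument. Two small corrections to your bookkeeping. First, the exponents must be matched as $(1-\delta)s=\tfrac34-\tfrac\delta2$ and $(1-\delta)s'=\tfrac14-\tfrac\delta2$ (the paper takes $s'=\tfrac14$, together with $s=0$ and $s'=0$ for the constant remainders $c_R$, which then come with decaying powers $\tau^{-3}$ and $\tau^{-1}$); your stated choice $s'=\tfrac12-\delta$ would give a prefactor $\tau^{1-4s'}=\tau^{-1+4\delta}$, i.e.\ a bound growing in $\tau$, and the absorption would fail --- the whole point is that at the correct matching the $\tau$-powers are exactly neutral and only the smallness of $\ep_{n,m,\delta}$ saves the day. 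Second, your ``regularize near $0$ and pass to the limit'' step silently requires knowing that the weighted quantities $\int|x|^{-2\tau}|\na\p|^2$ and $\int|x|^{-2\tau}|\Delta\p|^2$ are finite; the paper devotes a separate lemma (its Step~2) to proving that $\na\p$ and $\Delta\p$ also vanish to infinite order, using the equation again, and you would need the same input.
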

In all this document, when we write $L \le J$ for two symmetric operators $L$ and $J$, we mean it in the sense of forms. We recall that $\p$ vanishes to infinite order at $x_0 \in \er{n}$ when for all $k \geq 1$, there is a $c_k$ such that
 \begin{equation}\label{vanish}
	 \int_{\abs{x-x_0} < \ep} \abs{\p}^2 \d x < c_k \ep^k,
 \end{equation}
 for every $\ep <1$.

Let $A$ be a magnetic potential and $B$ a magnetic field. Physically in dimension 3, $A$ and $B$ are linked by $B = \rot A$, but we will consider arbitrary dimensions and artificially remove the link between $A$ and $B$. We consider the $N$-particle Pauli Hamiltonian
 \begin{multline}\label{op}
	  H^N (v,A,B)   \df \\
	 \sum_{\ell=1}^N \pa{ \pa{-i\na_{\ell} + A(x_{\ell})}^2 + \sigma_{\ell} \cdot B(x_{\ell}) + v(x_{\ell})} + \sum_{1 \le t \sle \ell \le N} w(x_t-x_{\ell}),
 \end{multline}
 where $\sigma_{\ell}$ are generalizations of Pauli matrices. They are $d$ square matrices of size $2^{\floor{\pa{d-1}/{2}}} \times 2^{\floor{\pa{d-1}/{2}}}$ used to form the $(d+1)$-dimensional chiral representation of the Clifford algebra, which structures Lorentz-invariant spinor fields 
 \cite[Appendix E]{WitSmi12}. 
As an operator-valued matrix, the only non-diagonal member is the Stern-Gerlach term $\sum_{\ell=1}^N \sigma_{\ell} \cdot B(x_{\ell})$, responsible for the Zeeman effect. We refer to \cite[Chapter XII and Complement $A_{\tx{XII}}$]{CohDiuLal86} for a discussion on this Hamiltonian. The previous theorem implies the strong UCP for this operator, which is our main result.

\begin{corollary}[Strong UCP for the many-body Pauli operator]\label{sucppauli}
	Let $\delta >0$ and assume that the potentials satisfy 
\begin{align} \label{borne}
\pa{\abs{v}^2  + \abs{w}^2 + \abs{B}^2 + \ab{\div A}^2 }\indic_{B_R} \leq \ep_{d,N} (-\Delta)^{\f{3}{2}-\delta} + c_{R} \qquad \tx{ in $\er{d}$}, \\
\abs{A}^2 \indic_{B_R}  \leq \ep_{d,N} (-\Delta)^{\f{1}{2}-\delta} + c_{R} \qquad \tx{ in $\er{d}$},
\end{align}
	for all $R>0$, where $\ep_{d,N}$ is a small constant depending only on $d$ and $N$. For instance $A \in L^q\loc (\R^d,\R^d)$ and ${\ab{B}, \div A, v,w \in L^p_{\rm{loc}} (\er{d},\R)}$ where
\begin{equation}\label{pq}
\left\{
\begin{array}{ll}
  p > \max \pa{\frac{2d}{3},2}, \\
  q > 2d.
\end{array}
\right.
\end{equation}
	Let $\p \in H_{\rm{loc}}^2(\er{dN})$ be a solution to $H^N(v,A,B) \p = 0$. If $\p$ vanishes on a set of positive measure or if it vanishes to infinite order at a point, then ${\p=0}$.
\end{corollary}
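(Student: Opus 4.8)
The plan is to recognize the Pauli eigenvalue equation $H^N(v,A,B)\psi=0$ as a particular case of the abstract system \eqref{system}, in dimension $n=dN$ and with $\psi$ valued in $\C^m$ where $m$ is the dimension of the total ($N$-particle) spin space, and then to invoke Theorem~\ref{sucp}. First I would expand each magnetic Laplacian, $\pa{-i\na_\ell+A(x_\ell)}^2=-\Delta_{x_\ell}-2iA(x_\ell)\cdot\na_\ell-i(\div A)(x_\ell)+\abs{A(x_\ell)}^2$, to rewrite $H^N=-\indic_{m\times m}\Delta_{\R^{dN}}+i\widetilde A\cdot\na_{\R^{dN}}+\widetilde V$, reading off the diagonal vector potential with every $A_\alpha$ equal to $-2\,(A(x_1),\dots,A(x_N))$ --- which matches the $\diag\pa{A_\alpha\cdot\na}$ form demanded by Theorem~\ref{sucp} --- and the potential matrix
\[ \widetilde V=\Bigl(\sum_{\ell=1}^N\bigl(v(x_\ell)-i(\div A)(x_\ell)+\abs{A(x_\ell)}^2\bigr)+\!\!\sum_{1\le t<\ell\le N}\!\!w(x_t-x_\ell)\Bigr)\indic_{m\times m}+\sum_{\ell=1}^N\sigma_\ell\cdot B(x_\ell), \]
whose only off-diagonal contribution is the Stern--Gerlach term. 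It then remains to check the three form bounds \eqref{hyps} for $\widetilde A$ and $\widetilde V$ out of the one-body assumptions \eqref{borne}.

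The key ingredient is a lifting lemma: if a function $f$ on $\R^d$ obeys $\abs{f}^2\indic_{B_R}\le\eta\pa{-\Delta_{\R^d}}^{s}+c_R$ for all $R>0$, then each of the functions $f(x_\ell)$ on $\R^{dN}$ obeys $\abs{f(x_\ell)}^2\indic_{B_R}\le\eta\pa{-\Delta_{\R^{dN}}}^{s}+c_R$. I would prove it by noting that $\indic_{B_R}(x)\le\indic_{B_R}(x_\ell)$ on $\R^{dN}$, so that by Fubini the hypothesis lifts to $\abs{f(x_\ell)}^2\indic_{B_R}\le\eta\pa{-\Delta_{x_\ell}}^{s}+c_R$ as a quadratic form on $L^2(\R^{dN})$, and then using that $-\Delta_{x_\ell}$ and $-\Delta_{\R^{dN}}$ are commuting Fourier multipliers with $0\le-\Delta_{x_\ell}\le-\Delta_{\R^{dN}}$, whence $\pa{-\Delta_{x_\ell}}^{s}\le\pa{-\Delta_{\R^{dN}}}^{s}$ for \emph{every} $s\ge0$ by the joint functional calculus. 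I stress that this last step uses commutativity and not operator monotonicity of $t\mapsto t^{s}$, which fails for the relevant exponent $s=\tfrac32-\delta>1$. The interaction terms are treated identically after the orthogonal change of variables $y=(x_t-x_\ell)/\sqrt2$, which turns $-\Delta$ in the pair into a sub-Laplacian of $-\Delta_{\R^{dN}}$. Summing over the $N$ particles and the $O(N^2)$ pairs multiplies the constants by a factor depending only on $d$ and $N$, which is why it suffices to fix $\ep_{d,N}$ in \eqref{borne} correspondingly small.

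With the lemma in hand, the first line of \eqref{hyps} would follow from the lifted bounds on $\abs{v}^2$, $\abs{w}^2$, $\abs{B}^2$, $\abs{\div A}^2$, together with a bound $\abs{A}^4\lesssim\pa{-\Delta}^{3/2-\delta}$ --- needed because the diagonal of $\widetilde V$ carries $\sum_\ell\abs{A(x_\ell)}^2$, the cross terms being controlled by $\abs{A(x_t)}^2\abs{A(x_\ell)}^2\le\tfrac12\pa{\abs{A(x_t)}^4+\abs{A(x_\ell)}^4}$ --- which holds since $q>2d\ge 4d/(3-2\delta)$ for $\delta<\tfrac12$. The second line of \eqref{hyps} is just the lifted bound on $\abs{A}^2$. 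For the third line I would estimate $\abs{\langle u,-iA(x_\ell)\cdot\na_\ell u\rangle}\le\eta\langle u,-\Delta_{x_\ell}u\rangle+\tfrac1{4\eta}\langle u,\abs{A(x_\ell)}^2u\rangle$, bound the second term by the (lifted) $\abs{A}^2$ estimate, absorb $\pa{-\Delta}^{1/2-\delta}\le\eta'\pa{-\Delta}+c(\eta')$, choose $\eta$ then $\eta'$ small, and sum over $\ell$. The ``for instance'' clause would then follow from H\"older's inequality and the fractional Sobolev embedding $H^s(\R^d)\hookrightarrow L^{2d/(d-2s)}(\R^d)$ (or $L^\infty$ when $2s>d$): $\abs{v}^2\in L^{p/2}_{\rm loc}$ lies above the critical exponent $d/(3-2\delta)$ and $\abs{A}^2\in L^{q/2}_{\rm loc}$ above $d/(1-2\delta)$ once $p>\max(2d/3,2)$, $q>2d$ and $\delta$ is fixed small in terms of $p,q,d$, and the relative bound is made arbitrarily small by splitting off the set where the function is large. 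Once \eqref{hyps} is verified, $\psi$ is a weak solution of \eqref{system} and Theorem~\ref{sucp} yields $\psi=0$.

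I expect the only genuinely delicate part to be the lifting lemma at the supercritical exponent $\tfrac32-\delta$, where operator monotonicity is unavailable and one must lean on the commutation of the one- and $N$-particle Laplacians, together with the bookkeeping of the $N$-dependence needed to push the resulting constants below the thresholds $\ep_{n,m,\delta}$ and $\ep_{n,m}$ of Theorem~\ref{sucp}. All the analytic substance already resides in Theorem~\ref{sucp}; the present corollary is essentially a translation.
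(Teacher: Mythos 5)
Your proposal is correct and follows essentially the same route as the paper, which proves the corollary exactly by this reduction: expand the magnetic Laplacians to put $H^N$ in the form \eqref{system}, lift the one-body form bounds to $\R^{dN}$ via the pointwise Fourier-multiplier comparison $\ab{k_\ell}^{2s}\le\ab{k}^{2s}$ (the paper outsources this to \cite[Corollary 1.2]{Garrigue18} and the Appendix), and invoke Theorem~\ref{sucp}. Your explicit handling of the diamagnetic term $\sum_\ell\ab{A(x_\ell)}^2$ in $\wt V$, which requires a bound on $\ab{A}^4$ not literally contained in \eqref{borne} but supplied by $q>2d$, is a point the paper's one-line proof glosses over, and you resolve it correctly.
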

The proof of this corollary is the same as the one of \cite[Corollary 1.2]{Garrigue18}. In particular, this result can be applied to the magnetic Schr\"odinger operator $H^N(v,A,0)$. In the Appendix we recall how assumptions of this Corollary \ref{sucppauli} imply assumption of Theorem \ref{sucp} on the gradient term.

\subsection{Hohenberg-Kohn theorems in presence of magnetic fields} \tx{ }

We give here two applications of our strong UCP result in Density Functional Theory. The first one is the classical Hohenberg-Kohn theorem in presence of a fixed magnetic field.

\subsubsection{Fixed magnetic fields} 
In presence of one spin internal degree of freedom, the one-particle density and the paramagnetic current of a wave function $\p$ are respectively defined by
 \begin{align*}
	 \ro_{\p}(x) & \df \sum_{(s_k)_{1 \le k \le N} \in \acs{\upa,\doa}^N} \sum_{i=1}^N \int_{\R^{d(N-1)}} \ab{\p^{s_k}}^2
	 \d x_1 \cdots \d x_{i-1} \d x_{i+1} \cdots \d x_N, \\
	 j_{\p}(x) & \df \im \hspace{-0.8cm} \sum_{(s_k)_{1 \le i \le N} \in \acs{\upa,\doa}^N} \sum_{i=1}^N \int_{\R^{d(N-1)}} \ov{\p^{s_k}} \na_i \p^{s_k}  \d x_1 \cdots \d x_{i-1} \d x_{i+1} \cdots \d x_N.
 \end{align*}

\begin{theorem}[Hohenberg-Kohn with a fixed magnetic field]\label{hkthm}
	Let $A \in (L^{q}+L^{\ii})(\er{d},\R^d)$, $B \in (L^{p}+L^{\ii})(\er{d},\R^d)$ and $w, v_1, v_2 \in (L^{p}+L^{\ii})(\er{d},\reals)$, with $p$ and $q$ as in \eqref{pq}. If there are two normalized eigenfunctions $\p_1$ and $\p_2$ of $H^N(v_1,A,B)$ and $H^N(v_2,A,B)$, corresponding to the first eigenvalues, and such that $\ro_{\p_1} = \ro_{\p_2}$, then there exists a constant $c$ such that $v_1 = v_2 + c$. 
\end{theorem}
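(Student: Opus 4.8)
The plan is to follow the classical Hohenberg--Kohn argument of Lieb \cite{Lieb83b}, the only non-classical ingredient being the strong unique continuation property of Corollary~\ref{sucppauli}. Write $E_i \df \inf\sigma\pa{H^N(v_i,A,B)}$ for the first eigenvalue, attained by the normalized ground state $\p_i$, for $i=1,2$. Under the stated hypotheses one first checks that $v_i$, $w$ and $B$ are infinitesimally $(-\Delta)$-form bounded and that the magnetic terms $A^2$ and $A\cdot\na$ are form-controlled by $-\Delta$ (for $d\ge 3$ because $p>2d/3>d/2$ and $q>2d>d$, the cases $d=1,2$ being analogous); hence, by the KLMN theorem, $H^N(v_i,A,B)$ is self-adjoint and bounded below on a form domain that does not depend on $v_i$. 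Moreover, an elliptic regularity bootstrap based on the Sobolev-multiplier bounds \eqref{borne} yields $\p_1,\p_2\in H^2_{\rm loc}(\er{dN})$, as in \cite{Garrigue18}.

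First I would run the Rayleigh--Ritz comparison. Since $H^N(v_1,A,B)=H^N(v_2,A,B)+\sum_{\ell=1}^N(v_1-v_2)(x_\ell)$ and $\p_2$ lies in the common form domain, testing the first Hamiltonian against $\p_2$ gives
\begin{equation*}
E_1 \;\le\; \ps{\p_2,\,H^N(v_1,A,B)\,\p_2} \;=\; E_2 + \int_{\er{d}}(v_1-v_2)\,\ro_{\p_2},
\end{equation*}
and symmetrically $E_2\le E_1+\int_{\er{d}}(v_2-v_1)\,\ro_{\p_1}$. Adding these and using $\ro_{\p_1}=\ro_{\p_2}$, the cross terms cancel, so $E_1+E_2\le E_1+E_2$ and both inequalities must be equalities. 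In particular $\ps{\p_2,\pa{H^N(v_1,A,B)-E_1}\p_2}=0$ with $H^N(v_1,A,B)-E_1\ge 0$, whence $\pa{H^N(v_1,A,B)-E_1}^{1/2}\p_2=0$, i.e.\ $H^N(v_1,A,B)\,\p_2=E_1\p_2$. Subtracting $H^N(v_2,A,B)\p_2=E_2\p_2$ produces the pointwise identity
\begin{equation*}
\pa{\,\sum_{\ell=1}^N(v_1-v_2)(x_\ell)\,-\,(E_1-E_2)\,}\p_2 \;=\; 0 \qquad\text{a.e.\ in }\er{dN}.
\end{equation*}

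Next I would apply Corollary~\ref{sucppauli} to $\p_2$: it is a normalized, hence nonzero, element of $H^2_{\rm loc}(\er{dN})$ solving $H^N(v_2-E_2,A,B)\p_2=0$, and $v_2-E_2$ still belongs to $(L^p+L^\ii)(\er{d},\R)$, so it still satisfies \eqref{borne}--\eqref{pq} (a constant lies in $L^\ii$). The corollary then forces the zero set $\acs{\p_2=0}$ to have Lebesgue measure zero, so the displayed identity gives $\sum_{\ell=1}^N(v_1-v_2)(x_\ell)=E_1-E_2$ for almost every $(x_1,\dots,x_N)\in\er{dN}$. Fixing a generic $(x_2,\dots,x_N)$ and letting $x_1$ vary shows that $v_1-v_2$ equals, almost everywhere, the number $(E_1-E_2)-\sum_{\ell\ge 2}(v_1-v_2)(x_\ell)$; being independent of $x_1$, this number is a single constant $c$ (necessarily $c=(E_1-E_2)/N$), and therefore $v_1=v_2+c$.

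Compared with the textbook Hohenberg--Kohn argument, the step I expect to be the main obstacle is purely technical: establishing $\p_i\in H^2_{\rm loc}(\er{dN})$ so that Corollary~\ref{sucppauli} is applicable, which requires the elliptic bootstrap adapted to the Sobolev-multiplier hypotheses rather than a crude $L^p$ estimate. Once the strong UCP of Corollary~\ref{sucppauli} is granted, the rest is Lieb's classical reasoning, with that corollary playing the role of the weak unique continuation results used in earlier, magnetic-free treatments.
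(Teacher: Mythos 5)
Your proposal is correct and follows essentially the same route as the paper, which simply invokes the standard Hohenberg--Kohn/Lieb variational argument and notes that the only new ingredient is Corollary~\ref{sucppauli} to ensure the ground state's nodal set has measure zero. The Rayleigh--Ritz comparison, the cancellation of the cross terms via $\ro_{\p_1}=\ro_{\p_2}$, the promotion of $\p_2$ to a ground state of $H^N(v_1,A,B)$, and the final use of the strong UCP to extract $v_1=v_2+(E_1-E_2)/N$ are exactly the steps the paper delegates to \cite{HohKoh64,Lieb83b,PinBokLud07,Garrigue18}.
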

The proof is the same as in the standard case where $A=B=0$, the only difference is that we need to use Corollary \ref{sucppauli} to justify that the nodal set of the ground states have zero measures. We refer to the same arguments as in~\cite{HohKoh64,Lieb83b,PinBokLud07,Garrigue18}.

\subsubsection{Ill-posedness of the Hohenberg-Kohn theorem for Spin-Current DFT} 
We recall the definition of Pauli matrices in dimension 3,
 \begin{align*}
	  \sigma^x = \begin{pmatrix} 0 & 1 \\ 1 & 0 \end{pmatrix}, \qquad \sigma^y = \begin{pmatrix} 0 & -i \\ i & 0 \end{pmatrix}, \qquad \sigma^z = \begin{pmatrix} 1 & 0 \\ 0 & -1 \end{pmatrix},
 \end{align*}
 they act on one-particle two-component wavefunctions $\phi = \begin{pmatrix} \phi^{\upa} &  \phi^{\doa} \end{pmatrix}^{T}$, where $\phi^{\upa}, \phi^{\doa} \in L^2(\er{d},\C)$ and $\int \abs{\phi}^2 = 1$. We denote by $L^2_{\tx{a}}(\R^{dN})$ the space of antisymmetric functions of $N$ variables in $\R^d$. The state of a system is described by wavefunctions $\p \in L^2_{\tx{a}}(\R^{dN},\C^{2^N})$. We introduce the one-body densities
 \begin{align*}
\ro^{\alpha \beta}_{\p}(x) \df \sum_{s \in \acs{\upa,\doa}^{N-1}} \sum_{i=1}^N \int_{\er{d(N-1)}} \p^{\alpha,s}(x,Y) \ov{\p^{\beta,s}}(x,Y) \d Y,
 \end{align*}
 where $\alpha, \beta \in \acs{\upa, \doa}$. We remark that $\ro_{\p}^{\upa\doa} = \ov{\ro_{\p}^{\doa\upa}} \eqdef \xi_{\p}$.
We define the density $\ro_{\p} \df \ro_{\p}^{\upa \upa} + \ro_{\p}^{\doa\doa}$ and the locally gauge invariant magnetization
 \begin{align*}
 m_{\p} \df \begin{pmatrix} \ro_{\p}^{\upa \doa} + \ro_{\p}^{\doa\upa} \\ -i\pa{\ro_{\p}^{\upa \doa} - \ro_{\p}^{\doa\upa}} \\ \ro_{\p}^{\upa \upa} - \ro_{\p}^{\doa\doa} \end{pmatrix} = \mat{ 2\re \xi_{\p} \\ 2 \im \xi_{\p} \\ \ro_{\p}^{\upa \upa} - \ro_{\p}^{\doa\doa}}.
 \end{align*}
 The energy of a quantum wavefunction is coupled to the magnetic field only through the density $\ro_{\p}$ and through the magnetization current $j_{\p}+\rot m_{\p}$. Indeed, using either bosonic or fermionic statistics,
 \begin{multline*}
	 \ps{\p,\sum_{\ell=1}^N \bigpa{\sigma_{\ell} \cdot \bigpa{-i\na_{\ell} + A(x_{\ell})}}^2 \p}\\
	 = \int \ab{\na \p}^2 + \int A^2 \ro_{\p} +  \int A \cdot (2 j_{\p} +\rot m_{\p}).
 \end{multline*}
Previously, the fields $A$ and $B$ were independent. We now assume the physical relation $B = \rot A$, take the Coulomb gauge $\div A = 0$ and consider the physical Hamiltonian $H^N (v,A)  \df  H^N(v,A,\rot A)$.

A natural question is whether the model with Pauli operator and varying magnetic fields has a corresponding Hohenberg-Kohn theorem, i.e. we ask whether $(\ro_{\p_1}, 2j_{\p_1} + \rot m_{\p_1}) =(\ro_{\p_2},2 j_{\p_2} + \rot m_{\p_2})$ (or even $(\ro_{\p_1},j_{\p_1},m_{\p_1})$ $=(\ro_{\p_2},j_{\p_2},m_{\p_2})$) implies $A_1 = A_2$ and $v_1 = v_2 +c$. This turns out to be wrong, due to counterexamples found by Capelle and Vignale in~\cite{CapVig02}.

Many authors studied this ill-posedness issue~\cite{Diener91,PanSah10,CapVig02,VigUllCap13,PanSah14,TaoPanSah11,TelKvaSag12,LaeBen14,Laestadius14,PanSah15,TelLaeHel18}, also from the point of view of Spin DFT, in which current effects are neglected~\cite{KohSha65,BarHed72,RajCal73,EscPic01,CapVig01,KohSavUll05,PanSah15,ReiBorTel17}, and from the point of view of Current DFT, in which spin effects are neglected~\cite{Diener91,PanSah10,TelKvaSag12,VigUllCap13,LaeBen14,PanSah14,LaeBen15}. In particular, see Laestadius and Benedicks in \cite[Theorem 2]{LaeBen14} for a counterexample in Current DFT.

Nevertheless, one could try to find a similar result using the physical total current, that is the one which can be measured in experiments, $j_{\tx{t}} \df j + \rot m + \ro A$ and respects $\div j_{\tx{t}} = 0$. As explained in~\cite{PanSah10,TelKvaSag12}, for one particle and for Current DFT where $j_{\tx{t}} \df j + \ro A$, the relation $\rot (j_{\tx{t}}/ \ro) = \rot A$ shows that the knowledge of $j_{\tx{t}}$ and $\ro$ gives the knowledge of $A$ and $v$ by the Hohenberg-Kohn theorem. The case of $N \ge 2$ particles is still open.

\subsubsection{Hohenberg-Kohn for the Maxwell-Schr\"odinger model}
We keep the dimension $d=3$. In order to get a model taking into account current effects but having a Hohenberg-Kohn theorem, and as a second application of our strong UCP result, we follow Tellgren~\cite{Tellgren18} and investigate the Maxwell-Schr\"odinger theory. This is a hybrid model of quantum mechanics where electrons are treated quantum mechanically and light is treated classically, and provides an approximation of non-relativistic QED~\cite{Heisenberg34,Erdos05}. It was studied through a DFT approach in~\cite{Tellgren18} and the resulting framework was called Maxwell DFT.
We define 
 \begin{align*}
\cA(\R^d,\R^d) \df \acs{ A \in H^1(\R^d,\R^d) \bigst \div A = 0 \tx{ weakly in } H^1(\R^d) },
 \end{align*}
the set of divergence-free magnetic potentials, i.e. potentials in the Coulomb gauge. A state of matter and light is given by a pair $(\p,a) \in L^2_{\tx{a}}(\R^{dN},\C^{2^N}) \times \cA$, where $\p$ describes electrons and where $a$ is an internal magnetic potential describing the photon cloud around the electrons.
 
We denote by $H^N_0 \df H^N(0,0)$ the kinetic and interaction parts of the Schr\"odinger operator. 
 The energy functional takes into account the energy of $\p$ coupled to the total magnetic field, and the kinetic energy of the internal magnetic field. We denote by $\alpha$ the fine structure constant and define ${\ep \df (8\pi \alpha^2)^{-1}}$. The Maxwell-Schr\"odinger energy functional is
 \begin{align*}
	 \cE_{v,A}(\p,a) & \df \ps{\p, H^N(v,a+A) \p} + \ep \int \ab{\rot a}^2 \\
	 & = \ps{\p,H_0^N \p} + \int  \pa{v+ \ab{a+A}^2} \ro_{\p}\\
	 &\bighs  + \int (a+A) \cdot (2j_{\p}+ \rot m_{\p}) +\ep \int \ab{\rot a}^2,
 \end{align*}
 for bosons or fermions. We denote by 
 \begin{align*}
	 E \df \myinf{\p \in (H^1 \cap L^2_{\tx{a}}) (\R^{dN},\C) \\ \int \ab{\p}^2 = 1 \\ a \in (L^q\loc \cap \cA) (\R^d,\C^d)} \cE_{v,A}(\p,a),
 \end{align*}
the ground state energy. This functional was studied in~\cite{LosYau86,Erdos05,LieLosSol95} when $A=0$, where the authors found that in the case of a Coulomb potential generated by only one atom having a large number of protons, this minimum was $-\ii$. In \cite[Theorem 1]{LieLosSol95}, they also prove that for Coulomb potentials induced by molecules, if the total number of protons in the molecule is lower than $1050$, independently of the positions of the nucleus, then the functional is bounded below. This justifies the applicability of the next theorem to physical systems. When one removes the Zeeman term $\sum_{\ell=1}^N \sigma_{\ell} \cdot B(x_{\ell})$ and considers the corresponding functional, then this issue disappears and the functional is always bounded from below for $v,w \in (L^{d/2}+L^{\ii})(\R^d)$ and $A \in (L^d+L^{\ii})(\R^d)$, by the diamagnetic inequality.

 The corresponding Euler-Lagrange equations are the Schr\"odinger equation together with a Maxwell equation. Using $\rot^* = \rot$ and $\rot \rot = \na \div - \Delta$ we can show that if it exists, the ground state $(\p,a)$ respects
 \begin{align*}
	    \left\{
      \begin{aligned}
	      & \sum_{\ell=1}^N \pa{-\Delta_{\ell} -2i  (a+A) \cdot \na_{\ell} +v+ \ab{a+A}^2 + \sigma_{\ell} \cdot (\rot (a +A))_{\ell}} \p = E \p, \\
	       &  j_{\p}+ \rot m_{\p} + \ro_{\p} (a +A)- \ep \Delta a =0,
      \end{aligned}
    \right.
 \end{align*}
 where we did not write all the $x_j$ arguments in the first equation for simplicity of notation.
The internal current of a state $(\p,a)$ is defined by
 \begin{align*}
 j_{(\p,a)} \df j_{\p} + \rot m_{\p} + \ro_{\p} a.
 \end{align*} 
We remark that if we did not fix the gauge $\div A = 0$, $j_{(\p,a)}$ would be locally gauge invariant. We make an other preliminary remark on the density of solutions of Schr\"odinger's equation.
\begin{remark}\label{rq}
Let $\p$ be a solution of $H^N(v,A,B)\p = 0$, under the assumptions of Corollary \ref{sucppauli}. Then its density vanishes almost nowhere, $$\ab{\acs{x \in \R^d \st \ro_{\p}(x)=0}}=0.$$ 
\end{remark}
Indeed, if $\ro_{\p}$ vanishes on a set $S \subset \R^d$ of positive measure, then $$N \int_{S \times \R^{d(N-1)}} \ab{\p}^2 = \int_S \ro_{\p} = 0$$ so $\p$ vanishes on $S \times \R^{d(N-1)}$ which has infinite volume. But by the strong UCP theorem for Pauli operators, Corollary \ref{sucppauli}, $\p$ does not vanish on sets of positive measure. This holds in any spin number, in particular this holds when there is no spin. 
We are now ready to prove the Hohenberg-Kohn theorem for this model. 

\begin{theorem}[Hohenberg-Kohn for Maxwell DFT]\label{mdft}
Let $p > 2$ and $q >6$ and let $w,v_1,v_2\in (L^p+L^{\ii})(\R^3,\R)$, $A_1, A_2 \in \bigpa{L^q\loc\cap \cA}(\R^3,\R^3)$ be potentials such that $\cE_{v_1,A_1}$ and $\cE_{v_2,A_2}$ are bounded from below and admit lowest energy states $(\p_1,a_1)$ and $(\p_2,a_2)$. If $\ro_{\p_1} = \ro_{\p_2}$ and $j_{(\p_1,a_1)} = j_{(\p_2,a_2)}$, then $A_1 = A_2$ and there is a constant $c$ such that $v_1 = v_2 + c$.
\end{theorem}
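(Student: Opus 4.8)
\emph{Strategy.} The plan is to run the Hohenberg--Kohn comparison not on the electronic wavefunction alone but on the joint matter--photon state $(\p,a)$, using that, once the internal potential $a$ is treated as a variational variable on the same footing as $\p$, the Maxwell--Schr\"odinger energy becomes a density functional of the pair $\bigpa{\ro_\p,\,j_{(\p,a)}}$; this is precisely what overcomes the Capelle--Vignale obstruction. Two elementary preliminary observations. (i) Expanding $\ab{a+A}^2$ one writes
\begin{equation*}
\cE_{v,A}(\p,a)=\mathcal K(\p,a)+\int v\,\ro_\p+\int\ab{A}^2\ro_\p+\int A\cdot j_{(\p,a)},
\end{equation*}
where $\mathcal K(\p,a)\df\ps{\p,H_0^N\p}+\int\ab{a}^2\ro_\p+\int a\cdot(2j_\p+\rot m_\p)+\ep\int\ab{\rot a}^2$ carries no dependence on $(v,A)$, while the remaining three terms involve the state only through $\ro_\p$ and $j_{(\p,a)}$ (up to the precise normalisation of the Zeeman term). (ii) If $(\p,a)$ is a lowest-energy state of $\cE_{v,A}$, then $a$ is a stationary point of $\cE_{v,A}(\p,\cdot)$ on $\cA$, hence solves the Maxwell equation $j_{(\p,a)}+\ro_\p A-\ep\Delta a=0$; and since $\cE_{v,A}(\psi,a)=\ps{\psi,H^N(v,a+A)\psi}+\ep\int\ab{\rot a}^2$ for all $\psi$, the lowest-energy property of $(\p,a)$ makes $\p$ a ground state of the Pauli operator $H^N(v,a+A)$.

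\emph{Step 1: the pinching.} Set $E_i\df\cE_{v_i,A_i}(\p_i,a_i)$ and use $(\p_2,a_2)$ and $(\p_1,a_1)$ as trial states for $\cE_{v_1,A_1}$ and $\cE_{v_2,A_2}$ respectively, so that $E_1+E_2\le\cE_{v_1,A_1}(\p_2,a_2)+\cE_{v_2,A_2}(\p_1,a_1)$. By (i), the $\mathcal K$-terms on the right cancel against those in $E_1+E_2$, and what is left is a sum of integrals of fixed functions against $\ro_{\p_2}-\ro_{\p_1}$ and against $j_{(\p_2,a_2)}-j_{(\p_1,a_1)}$, which vanish by the hypotheses $\ro_{\p_1}=\ro_{\p_2}$ and $j_{(\p_1,a_1)}=j_{(\p_2,a_2)}$. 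Hence $\cE_{v_1,A_1}(\p_2,a_2)+\cE_{v_2,A_2}(\p_1,a_1)=E_1+E_2$, and since each summand is at least the corresponding $E_i$, both inequalities are equalities: $(\p_2,a_2)$ is a lowest-energy state of $\cE_{v_1,A_1}$, and $(\p_1,a_1)$ of $\cE_{v_2,A_2}$.

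\emph{Step 2: the potentials coincide.} By Step 1 and (ii), $a_1$ solves the Maxwell equation both with external field $A_1$ and with external field $A_2$, while $j_{(\p_1,a_1)}$, $\ro_{\p_1}$ and $\Delta a_1$ are unchanged; subtracting gives $\ro_{\p_1}(A_1-A_2)=0$ a.e. Now $\p_1$ is a nontrivial $H^2\loc$ solution of $H^N\bigpa{v_1-\lambda_1/N,\,a_1+A_1,\,\rot(a_1+A_1)}\p_1=0$, $\lambda_1$ being the ground-state energy of $H^N(v_1,a_1+A_1)$; these potentials meet the hypotheses of Corollary~\ref{sucppauli} --- for the total field $\rot(a_1+A_1)$ one uses elliptic regularity on the Maxwell equation to place $\rot a_1$ in the $L^p\loc$ class of \eqref{pq} --- so Remark~\ref{rq} gives $\ro_{\p_1}>0$ almost everywhere, whence $A_1=A_2=:A$. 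Finally, $(\p_1,a_1)$ being a lowest-energy state of both $\cE_{v_1,A}$ and $\cE_{v_2,A}$, observation (ii) makes $\p_1$ a ground state of both $H^N(v_1,a_1+A)$ and $H^N(v_2,a_1+A)$, two Pauli operators differing only by the multiplication operator $\sum_{\ell=1}^N(v_1-v_2)(x_\ell)$; subtracting the two eigenvalue equations yields $\sum_{\ell=1}^N(v_1-v_2)(x_\ell)=\lambda_1-\lambda_2$ on $\{\p_1\neq0\}$, which is co-null by Corollary~\ref{sucppauli}, so $v_1-v_2=(\lambda_1-\lambda_2)/N$ almost everywhere. (Equivalently, subtracting the two Maxwell equations at $A_1=A_2$ gives $\Delta(a_1-a_2)=0$, hence $a_1=a_2$ as $a_1-a_2\in H^1(\R^3)$, and one may instead quote Theorem~\ref{hkthm}.)

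\emph{Expected obstacle.} The conceptual heart is Step 1 together with observation (i): recognising that promoting the photon field to a dynamical variable makes $\cE_{v,A}$ a density functional of $\bigpa{\ro_\p,j_{(\p,a)}}$, so that the textbook Hohenberg--Kohn pinching forces each competitor state to minimise the other's functional --- the decisive rigidity being then that $a_1$ must satisfy \emph{two} Maxwell equations differing only in the external potential, which pins $A$ down on $\{\ro_{\p_1}>0\}$. The hard part will be the technical verification that $\p_1$ and $\p_2$ satisfy the hypotheses of Corollary~\ref{sucppauli}, i.e.\ that the total magnetic field $\rot(a_i+A_i)$ --- not assumed directly --- lies in the $L^p\loc$ class of \eqref{pq}, which for the internal contribution $\rot a_i$ requires an elliptic-regularity bootstrap on the Maxwell equation $\ep\Delta a_i=j_{(\p_i,a_i)}+\ro_{\p_i}A_i$.
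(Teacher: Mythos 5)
Your proposal follows essentially the same route as the paper: the Hohenberg--Kohn pinching forces $(\p_2,a_2)$ to minimise $\cE_{v_1,A_1}$, the difference of the two Maxwell (Euler--Lagrange) equations gives $\ro\,(A_1-A_2)=0$, Corollary~\ref{sucppauli} via Remark~\ref{rq} upgrades this to $A_1=A_2$, and the difference of the two Schr\"odinger equations then yields $v_1=v_2+(E_1-E_2)/N$. The two technical points you flag --- the precise normalisation of the Zeeman coupling (the leftover $\int (A_1-A_2)\cdot\rot m_{\p}$ term, which is not a function of $(\ro_\p, j_{(\p,a)})$ alone) and the verification that $\rot a_i$ lies in the $L^p\loc$ class needed for Corollary~\ref{sucppauli} --- are likewise left implicit in the paper's own argument.
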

This result shows that in the Maxwell-Schr\"odinger framework, the knowledge of the ground state density $\ro$ and internal current $j+ \rot m + \ro a$ gives the knowledge of $v$ and $A$. Said differently, at equilibrium, $\ro$ and $j+ \rot m+ \ro a$ contain the information of $v$ and $A$. This is a rigorous justification of Tellgren's Hohenberg-Kohn theorem~\cite{Tellgren18}.
\begin{proof}
	Let us denote by $\ro \df \ro_{\p_1} = \ro_{\p_2}$ the common densities, by $j' \df j_{(\p_1,a_1)} = j_{(\p_2,a_2)}$ the common internal currents, and by $E_i \df \cE_{v_i,A_i}(\p_i,a_i)$ for $i \in \acs{1,2}$ the ground state energies. By the standard proof of the Hohenberg-Kohn theorem~\cite{HohKoh64,Garrigue18}, we can prove that $\cE_{v_1,A_1}(\p_2,a_2) = E_1$. So $(\p_2,A_2)$ respects the Euler-Lagrange equations for $\cE_{v_1,a_1}$, that is
 \begin{align*}
	    \left\{
      \begin{aligned}
	      & \sum_{\ell=1}^N \pa{-\Delta_{\ell} -2i  (a_2+A_1) \cdot \na_{\ell} +v_1+ \ab{a_2+A_1}^2 + \sigma_{\ell} \cdot (\rot (a_2+A_1))_{\ell} } \p_2 \\
	      & \bighs \bighs\bighs\bighs\bighs\bighs\bighs\bighs\bighs\bighs= E_1 \p_2, \\
	      &  j' + \ro A_1- \ep \Delta a_2  = 0.
      \end{aligned}
    \right.
 \end{align*}
	We take the difference of those equations with the Euler-Lagrange equations verified by $(\p_2,a_2)$ for $\cE_{v_2,A_2}$ and get
	\begin{align}\label{el}
	    \left\{
      \begin{aligned}
	      & E_2-E_1 + \sum_{\ell=1}^N \bigpa{ -2i (A_1-A_2) \cdot \na_{\ell} + \sigma_{\ell} \cdot (\rot (A_1-A_2))_{\ell} }\p_2 \\
	      & \bighs\bighs\bighs +\sum_{\ell=1}^N\bpa{v_1-v_2 + \ab{a_2+A_1}^2 - \ab{a_2+A_2}^2 }(x_{\ell}) \p_2 = 0, \\
	      &  \ro (A_1-A_2)=0.
      \end{aligned}
    \right.
 \end{align}
	By the unique continuation result of Corollary \ref{sucppauli}, and by Remark \ref{rq}, $\ro$ does not vanish on sets of positive measure, therefore the second equation in \eqref{el} yields $A_1=A_2$. Using it in the first equation of \eqref{el}, we get 
 \begin{align*}
	 \biggpa{E_2-E_1 + \sum_{\ell=1}^N \pa{v_1-v_2}(x_{\ell})} \p_2 = 0
 \end{align*}
	so, by the same argument as in~\cite{Garrigue18}, we conclude that $v_1 = v_2 + (E_1-E_2)/N$.
 \end{proof}

One could still want to search for a Hohenberg-Kohn theorem in the standard Schr\"odinger model but involving the knowledge of $j_{\tx{t}}$ instead of the knowledge of $j$. This is an open problem. Our result easily extends to the same model but without spin effects, that is when we take for the one-body kinetic operator $(-i \na + A)^2$ instead of $\pa{\sigma \cdot \pa{-i \na + A}}^2$. Then the internal current is $j + \ro a$ and the above results hold.

\section{Proofs of Carleman inequalities}\label{sectioncarl}

\subsection{Proof of Theorem \ref{carl}}

	 We use standard arguments which can for instance be read in~\cite{KocTat01,Ruland18}. We denote by $r \df \ab{x}$ the radial coordinate. In dimension $n$, the Laplace operator in spherical coordinates is $\Delta = \partial_{rr} + \f{n-1}{r}\partial_r + \inv{r^2} \Delta_S$, where $\Delta_S$ is the Laplace-Beltrami operator on the $(n-1)$-dimensional sphere. Using log-spherical coordinates $t \df \ln r$, we have
 \begin{align*}
	 \abs{x}^2 \Delta = \partial_{tt} + (n-2)\partial_t + \Delta_S.
 \end{align*}
We take the function $\phi(x) = -\ln \ab{x} + (-\ln \ab{x})^{-\alpha}$ as in the statement of the Theorem, and define $\vp(t) \df \phi(e^t)$. More explicitly,
 \begin{align*}
	 \vp(t) & = -t + \inv{(-t)^{\alpha}}, \bighs  \vp'(t) = -1 + \f{\alpha}{(-t)^{\alpha+1}}, \bighs  \vp''(t) = \f{\alpha(\alpha+1)}{(-t)^{\alpha+2}}, \\
	 & \vp'''(t)  = \f{\alpha(\alpha+1)(\alpha+2)}{(-t)^{\alpha+3}}, \bighs \vp''''(t) =  \f{\alpha(\alpha+1)(\alpha+2)(\alpha+3)}{(-t)^{\alpha+4}},
 \end{align*}
	 so $-1 < \vp' < -1/8$ and $\vp'',\vp''',\vp'''' > 0$ on $]-\ii,-\ln 2]$. 
Conjugating the previous operator $\abs{x}^2 \Delta$ with $e^{\tau \phi}$ yields
 \begin{align*}
	 P \df e^{\tau \phi} \abs{x}^2 \Delta e^{-\tau \phi} = \partial_{tt} + \pa{-2\tau \vp' +n-2}\partial_t+ \tau^2 \vp'^2 -\tau (n-2) \vp'  + \Delta_S,
 \end{align*}
 and decomposing the result in symmetric and antisymmetric parts, we have $P = S + A$, where
 \begin{align*}
	 S & \df \partial_{tt} + \tau^2 \varphi'^2  - \tau (n-2)\vp' + \tau \vp''+ \Delta_S, \\
	 A & \df \pa{-2\tau \vp' + n-2} \partial_t - \tau \vp''.
 \end{align*}
	 We implicitly take the $L^2(\R^n)$ norm. 
	 We want to manipulate $\no{Pv}^2 = \no{Sv}^2 + \no{Av}^2 + \ps{v,\seg{S,A}v}$ for a function $v \in C^{\ii}_{\tx{c}}\pa{]-\ii,-\ln 2] \times \bbS^{n-1},\C}$. We compute 
 \begin{align*}
	 \seg{S,A} = & -4 \tau \vp'' \partial_{tt} - 2 \tau \vp''' \partial_t \\
	 &  - \pa{-2\tau \vp' + n-2}\pa{2\tau^2 \vp' \vp'' - \tau (n-2) \vp'' + \tau \vp'''}- \tau \vp''''.
 \end{align*}
 We have $2 \re{\ps{Sv,Av}} = \ps{v,\seg{S,A}v}$ so this term is real and integrating by parts yields
 \begin{multline*}
	 \ps{v,\seg{S,A}v} = 4 \tau^3 \int \vp'^2 \vp'' \ab{v}^2 + 2 \tau^2 \int \vp' \pa{\vp'''-n \vp''} \ab{v}^2  \\
	+ 4 \tau \int \vp'' \abs{\partial_t v}^2 + \tau (n-2)^2 \int \vp'' \ab{v}^2  - 2 \tau \int \vp'''' \ab{v}^2 - \tau (n-2) \int \vp''' \ab{v}^2.
 \end{multline*}
Thus for $\tau$ large enough,
 \begin{align}\label{partialt}
4 \tau^3 \int \vp'^2 \vp'' \ab{v}^2 + 4 \tau \int \vp'' \ab{\partial_{t} v}^2 \le  \ps{v,\seg{S,A}v} \le\no{Pv}^2.
 \end{align}
With $\ab{\ps{\vp'' v, Sv}} \le \no{\vp'' v}\no{Sv} \le \tau^{-\f{3}{2}}\no{Pv}^2/2$, we compute the radial part of the gradient
 \begin{align*}
	 \int \vp''  \ab{\na_S v}^2 & = \ps{ \vp''v, (-\Delta_S)v} \\
	 & =  \ps{ \vp''v, \pa{-S+ \partial_{tt} + \tau^2 \varphi'^2  - \tau (n-2)\vp' + \tau \vp'' }v} \\
	 & = \tau^2 \int \vp'^2 \vp'' \ab{v}^2 -\tau(n-2) \int \vp' \vp'' \ab{v}^2+ \tau \int \vp'' \ab{v}^2 \\
	 & \bighs  +  \ud \int \vp'''' \ab{v}^2 - \ps{\vp'' v, Sv}-\int \vp'' \ab{\partial_t v}^2  \\
	  & \le \tau^2 \int \vp'^2 \vp'' \ab{v}^2- \ps{\vp'' v, Sv} \le \inv{2 \tau} \no{Pv}^2,
 \end{align*}
	 for $\tau$ large enough.
 Now, using the inequality \eqref{partialt} again, we find
	 \begin{align}\label{thi}
	 \tau^3 \int \vp'^2 \vp'' \ab{v}^2 + \tau \int \vp'' \pa{ \ab{\partial_t v}^2 + \ab{\na_S v}^2} \le \no{Pv}^2.
 \end{align}
	 Working back in cartesian coordinates, we have 
 \begin{align*}
\ab{\partial_t v}^2 + \ab{\na_S v}^2 = \ab{x}^2 \ab{\na v}^2.
 \end{align*}
	 
	 We can now apply the previous well-known techniques to form the inequality \eqref{prem}, which is fitted with our application. Defining $u \df e^{\tau \phi} v$ and using 
	 \begin{align}\label{estno}
		  1 \le \ab{x}e^{\phi} \le e,
	 \end{align}
	 inequality \eqref{thi} implies
 \begin{align*}
	 \tau^3 \int \f{\ab{e^{(\tau+2) \phi} u}^2}{(-\ln \ab{x})^{2+\alpha}}  + \tau \int \f{\ab{x}^2 \ab{\na \pa{e^{(\tau+2) \phi} u}}^2}{(-\ln \ab{x})^{2+\alpha}} \le \f{2^6 e^4}{\alpha} \int \ab{ e^{\tau \phi} \Delta u}^2.
 \end{align*}
	 Using also 
	 \begin{align}\label{estgr}
 1 \le \ab{x} \ab{\na \phi} \le e^{(\ln 2)^{-1/2}} \le 4,
	 \end{align}
in $B_{1/2}$, we have 
	 \begin{align*}
		 \ab{e^{(\tau+1)\phi} \na u}^2 & = \ab{e^{-\phi} \pa{e^{(\tau+2) \phi} \na u}}^2 \\
		 & = \ab{e^{-\phi} \na \pa{ e^{(\tau+2) \phi} u} - (\tau+2) e^{-\phi}(\na \phi) e^{(\tau+2) \phi} u}^2 \\
		 & \le 2 e^{-2\phi} \ab{\na \pa{ e^{(\tau+2) \phi} u}}^2 + 2(\tau+2)^2  e^{-2\phi}\ab{\na \phi}^2 \ab{e^{(\tau+2) \phi} u}^2 \\
		 & \le 2 \ab{x}^2 \ab{\na \pa{e^{(\tau+2)\phi} u}}^2 + 2^5 (\tau+2)^2 \ab{e^{(\tau+2)\phi} u}^2,
	 \end{align*}
and similarly
 \begin{align*}
	 \ab{\na \pa{ e^{(\tau+1) \phi} u}}^2 & = \ab{ (\tau+1) \pa{ \na \phi} e^{(\tau+1) \phi} u + e^{(\tau+1) \phi} \na u}^2 \\
	 & \le 2^5(\tau+1)^2 \ab{e^{(\tau+2) \phi} u}^2 +2 \ab{e^{(\tau+1) \phi} \na u}^2.
 \end{align*}
 Eventually, we obtain
 \begin{multline*}
	 \tau^3 \int \f{\ab{e^{(\tau+2) \phi} u}^2}{(-\ln \ab{x})^{2+\alpha}}  + \tau \int \f{\ab{ e^{(\tau+1)\phi} \na u}^2}{(-\ln \ab{x})^{2+\alpha}} + \tau \int \f{\ab{ \na \pa{e^{(\tau+1)\phi} u}}^2}{(-\ln \ab{x})^{2+\alpha}} \\
	 \le \f{2^{14} e^4}{\alpha} \int \ab{ e^{\tau \phi} \Delta u}^2.
 \end{multline*}
 These are the first terms in \eqref{prem}. We now turn to the estimates on the second derivative.
 Since $\ab{x}^2 \Delta \phi = (n-2)  \vp'(\ln \ab{x}) + \vp''(\ln \ab{x})$ we have 
 \begin{align}\label{estlap}
 \ab{x}^2 \ab{\Delta \phi} \le \ab{n-2} + \f{3}{4(\ln 2)^{3/2}} \le n+4.
 \end{align}
 Since 
 \begin{align}\label{grads}
	 \na e^{\tau \phi} = \tau e^{\tau \phi} \na \phi, \bhs  \Delta e^{\tau \phi} = \tau e^{\tau \phi} \pa{ \Delta \phi + \tau \ab{\na \phi}^2},
 \end{align}
then we find
 \begin{align*}
	 \tau^{-1} \int \f{\ab{ \Delta \pa{e^{\tau\phi} u}}^2}{(-\ln \ab{x})^{2+\alpha}}  \le \f{2^{25} e^4(n+4)^2}{\alpha} \int \ab{ e^{\tau \phi} \Delta u}^2.
 \end{align*}
The constant $c_n$ in \eqref{prem} can be taken to be $2^{25} e^4(n+4)^2$, for instance.
\qed

\subsection{Proof of Corollary \ref{fraccarl}}

	 We denote by $c$ a constant which only depends on the dimension $n$. We fix $\alpha = 1/2$. For any $a \in ]0,1[$, we have
	 \begin{align*}
		 e^{-2a\phi} \le \f{c}{a^{\f{5}{2}} (-\ln\ab{x})^{\f{5}{2}}},
	 \end{align*}
 on $B_{1/2}$.
	 So the inequality \eqref{prem} taken from {Theorem \ref{carl}} implies
	 \begin{align}\label{mh}
	 & \tau^3 \no{e^{(\tau+2-a) \phi} u}^2+  \tau \no{e^{(\tau+1-a) \phi} \na u}^2  \\
	 & \bighs \le c \tau^3 a^{-\f{5}{2}} \no{\f{e^{(\tau+2) \phi} u}{(-\ln \ab{x})^{\f{5}{4}}}}^2+ c \tau a^{-\f{5}{2}}  \no{\f{e^{(\tau+1) \phi} \na u}{(-\ln \ab{x})^{\f{5}{4}}}}^2  \nonumber\\
	 & \bighs \leq c a^{-\f{5}{2}} \no{e^{\tau \phi} \Delta u}^2. \nonumber
 \end{align}
	 We now compute
 \begin{align*}
	 & \ab{x}^{a}  \ab{\Delta \pa{e^{\tau \phi} u}} \\
	 & \bhs = \ab{x}^{a} \ab{ u\Delta e^{\tau \phi} + 2 \na u \cdot \na e^{\tau \phi} + e^{\tau \phi} \Delta u} \\
	 & \bhs= \ab{x}^{a} \ab{  \tau e^{\tau \phi} u \Delta \phi  + \tau^2  \ab{\na \phi}^2 e^{\tau \phi} u +2\tau \na \phi \cdot e^{\tau \phi} \na u + e^{\tau \phi} \Delta u} \\
	 & \bhs \le \tau^2 e^{\tau \phi} \ab{u} \pa{ \ab{x}^a \ab{\Delta \phi} + \ab{x}^a \ab{\na \phi}^2} + 2 \tau e^{\tau \phi} \ab{\na u} \ab{x}^a \ab{\na \phi} \\
	 & \bhs \bhs \bhs + e^{\tau \phi} \ab{\Delta u} \ab{x}^a.
 \end{align*}
	 Since we work in $B_{1/2}$, we have $\ab{x}^a \le 1$. Using \eqref{estgr} yields $\ab{x}^a \ab{\na \phi}^2 \le 16 \ab{x}^{a-2}$ and $\ab{x}^a \ab{\na \phi} \le 4 \ab{x}^{a-1}$, and \eqref{estlap} yields $\ab{x}^a \ab{\Delta \phi} \le (n+4) \ab{x}^{a-2}$, so we get
	 \begin{align}\label{gh}
	 & \ab{x}^{a} \ab{\Delta \pa{e^{\tau \phi} u}} \\
	 & \bhs \le  (n+20) \tau^2 \ab{x}^{a-2} e^{\tau  \phi} \ab{u} + 8 \tau \ab{x}^{a-1} e^{\tau\phi} \ab{\na u} +  e^{\tau \phi} \ab{\Delta u} \nonumber \\
	  & \bhs\le (n+20) e^{a-2} \tau^2 e^{(\tau +2-a) \phi} \ab{u} + 8e^{a-1} \tau e^{(\tau+1-a)\phi} \ab{\na u} +  e^{\tau \phi} \ab{\Delta u} \nonumber\\
	  & \bhs\le c \pa{ \tau^2 e^{(\tau +2-a) \phi} \ab{u} + \tau e^{(\tau+1-a)\phi} \ab{\na u} +  e^{\tau \phi} \ab{\Delta u}},\nonumber
 \end{align}
	 where in the second inequality we applied \eqref{estno}. We will also use the fractional Hardy inequality,
	 \begin{align}\label{hardy}
		 \nor{(-\Delta)^{-\delta} \ab{x}^{-2\delta}}{L^2(\R^n) \ra L^2(\R^n)} = 4^{-\delta} \pa{\f{\Gamma\pa{\f{n-2\delta}{4}} }{ \Gamma\pa{\f{n+2\delta}{4}}}}^2 \le 1,
	 \end{align}
	  which holds for any $\delta \in [0,n/2[$. Its sharp constant was found in~\cite{Herbst77,Beckner95a,Yafaev99}. Choosing $a = \delta/2 \in \seg{0,n/4}$, we are ready the deduce that
 \begin{align*}
	 & \no{ (-\Delta)^{1-\f{a}{2}} \pa{e^{\tau \phi} u}} \\
	 & \bighs =\no{(-\Delta)^{-\f{a}{2}} \ab{x}^{-a} \ab{x}^a  (-\Delta) \pa{e^{\tau \phi} u}} \\
	 & \bighs\le \no{\ab{x}^{ a} (-\Delta)\pa{e^{\tau \phi} u}} \\
	 &\bighs \le  c \pa{ \tau^2 \no{e^{(\tau+2-a) \phi} u } + \tau \no{e^{(\tau+1-a) \phi} \na u} + \no{e^{\tau \phi} \Delta u}} \\
	 &\bighs \le  c a^{-\f{5}{4}} \tau^{\ud} \no{e^{\tau \phi} \Delta u},
 \end{align*}
	 where, in the inequalities, we respectively used \eqref{hardy}, \eqref{gh} and \eqref{mh}.
	 Applying H\"older's inequality together with $$\no{ e^{\tau \phi} u} \le ca^{-\f{5}{4}} \tau^{-\f{3}{2}}\no{e^{\tau \phi} \Delta  u},$$ as implied by \eqref{prem}, yields the first part of our claim \eqref{mainineq}. We remark that this is also true for $a \in [n/4,1[$.

	 Now we show the second part of the inequality.
	  We begin by expanding
 \begin{align*}
	 & \ab{x}^a \ab{\partial_i \pa{e^{\tau\phi} \partial_j u}} \\
	 & \hs \hs\hs= \ab{x}^a\Bigr\vert \partial_{ij} \pa{e^{\tau \phi} u} - \tau  \pa{\partial_{ij} \phi }e^{\tau \phi} u -\tau^2 \pa{\partial_j \phi} \pa{\partial_i \phi} e^{\tau \phi} u -\tau \pa{\partial_j \phi} e^{\tau \phi} \partial_i  u \Bigr\vert \\
	 &\hs \hs\hs \le  c \tau^2 e^{(\tau+2-a) \phi} \ab{u} + c \tau e^{(\tau+1-a) \phi}  \ab{\partial_i  u} + \ab{\partial_{ij} \pa{e^{\tau \phi} u}}.
\end{align*}
	 Therefore by \eqref{hardy},
 \begin{align*}
	 & \no{(-\Delta)^{\ud-\f{a}{2}} \pa{e^{\tau \phi} \partial_j u}} \\
	 & \bighs = \no{(-\Delta)^{-\f{a}{2}} \na \pa{e^{\tau \phi} \partial_j u}} \\
	 & \bighs \le \no{\ab{x}^{a} \na \pa{e^{\tau \phi} \partial_j u}} \\
	 &\bighs  \le c \pa{\tau^2 \no{e^{(\tau+2-a) \phi} u} + \tau \no{ e^{(\tau+1-a) \phi} \na u} + \no{\partial_{ij} \pa{e^{(\tau-a) \phi} u}} } \\
	 &\bighs  \le c \pa{\tau^2 \no{e^{(\tau+2-a) \phi} u} + \tau \no{ e^{(\tau+1-a) \phi} \na u} + \no{\Delta \pa{e^{(\tau-a) \phi} u}} } \\
	 &\bighs  \le c a^{-\f{5}{4}} \tau^{\ud} \no{e^{\tau \phi} \Delta u},
 \end{align*}
where we used $2 \ab{k_i k_j} \le k_i^2 + k_j^2$.
	 Applying H\"older's inequality together with $$\no{e^{\tau \phi} \partial_j u} \le c a^{-\f{5}{4}} \tau^{-\f{3}{2}} \no{e^{\tau \phi} \Delta u},$$ we obtain the second part of the sought-after inequality \eqref{mainineq}.
	 \qed

\section{Proof of the strong unique continuation property}
We present here the proof of Theorem \ref{sucp}, which follows rather closely that in~\cite{Garrigue18}.

\subsection*{Step 1. Vanishing on a set of positive measure implies vanishing to infinite order at one point.}

To prove that $\p$ vanishes to infinite order at a point, we extend a property showed by Figueiredo and Gossez in~\cite{FigGos92}, to magnetic fields.
\begin{proposition}[Figueiredo-Gossez with magnetic term]\tx{ }

Let $V \in L^1_{\rm{loc}} (\er{n},\C)$ and $A \in L^1_{\rm{loc}} (\R^n,\R^n)$ such that for every $R >0$, there exist $a,a'$ and $c >0$ such that ${a+a' \sle 1}$ and 
\begin{align*}
	-\indic_{B_R} \re V & \leq a (-\Delta) +c, \\
	 \ps{u,-i \indic_{B_R} A \cdot \na u} & \leq \ps{u,\pa{a' (-\Delta) +c} u}, \hs\hs \forall u \in \cC^{\ii}\ind{c}(\R^n).
\end{align*}
Let $\p \in H^1_{\rm{loc}}(\er{n})$ satisfying $-\Delta \p + i A \cdot \na \p + V \p=0$ weakly. If $\p$ vanishes on a set of positive measure, then $\p$ has a zero of infinite order.
\end{proposition}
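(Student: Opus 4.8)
The plan is to run the Lebesgue density-point argument of \cite{FigGos92}, the only genuinely new point being the way the first-order magnetic term is absorbed. Since $\acs{\p = 0}$ has positive measure, fix $R$ large enough that $E \df \acs{\p = 0}$ has positive measure in $B_{R/2}$, and pick a Lebesgue density point $x_0 \in B_{R/2}$ of $E$; then $\ab{B_\rho(x_0) \setminus E} = o(\rho^n)$ as $\rho \to 0^+$. Setting $g(\rho) \df \int_{B_\rho(x_0)} \ab{\p}^2$, the goal is to show $g(\rho) \le c_k \rho^k$ for every $k$ and all small $\rho$, which is exactly \eqref{vanish}.

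The first step is a Caccioppoli-type estimate $\int_{B_\rho(x_0)} \ab{\na \p}^2 \le C(r-\rho)^{-2} \int_{B_r(x_0)} \ab{\p}^2$, valid for $B_r(x_0) \subset B_R$ with $r$ small. To get it I would test the weak equation against $\chi^2 \p$, with $\chi$ a cutoff equal to $1$ on $B_\rho(x_0)$ and supported in $B_r(x_0)$, take real parts, and absorb the first- and zeroth-order contributions. The key point is the gauge-covariant rewriting: with $w \df \chi \p$ one has $-\re \ps{\chi^2 \p, \, i A \cdot \na \p} = \re \ps{w, \, -i A \cdot \na w}$ (the difference being $-i \int \ab{\p}^2 \chi \, A \cdot \na \chi$, which is purely imaginary), so the hypothesis on $-i \indic_{B_R} A \cdot \na$ applies to $w$ itself and bounds this term by $a' \no{\na w}^2 + c \no{w}^2$; likewise $-\int (\re V) \chi^2 \ab{\p}^2 = \int (-\indic_{B_R} \re V) \ab{w}^2 \le a \no{\na w}^2 + c \no{w}^2$. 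Using $\no{\na w}^2 \le (1+\varepsilon) \int \chi^2 \ab{\na \p}^2 + C_\varepsilon \int \ab{\na \chi}^2 \ab{\p}^2$, handling the cross term $-2\re \int \chi\, \ov{\p}\, \na\chi\cdot\na\p$ by Cauchy--Schwarz, and using $a + a' < 1$, all $\int \chi^2 \ab{\na \p}^2$ terms on the right are absorbed into the left, leaving the Caccioppoli inequality (for small radii the remaining lower-order terms are also dominated by $(r-\rho)^{-2}\int_{B_r(x_0)}\ab{\p}^2$). This absorption is the step I expect to be the main obstacle: in \cite{FigGos92} there is no first-order term, and fitting the magnetic one in requires exactly this rewriting together with the relative form bound.

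Next I would exploit that the non-vanishing set $B_\rho(x_0) \setminus E$ is small. Taking $\chi$ a cutoff of $B_\rho(x_0)$ supported in $B_{2\rho}(x_0)$ and using Sobolev's inequality $\no{w}_{L^{2^*}}^2 \le c_n \no{\na w}^2$ (for $n \ge 3$; for $n \le 2$ one uses $H^1 \hookrightarrow L^q$ with $q$ large, resp. $H^1 \hookrightarrow L^\infty$, with the obvious modification), together with H\"older's inequality on $B_{2\rho}(x_0) \setminus E$, where $w$ is supported, I get
\[
	g(\rho) \le \no{w}^2 = \no{w}_{L^2(B_{2\rho}(x_0) \setminus E)}^2 \le \ab{B_{2\rho}(x_0) \setminus E}^{2/n} \no{w}_{L^{2^*}}^2 \le c_n \ab{B_{2\rho}(x_0) \setminus E}^{2/n} \no{\na w}^2 .
\]
Bounding $\no{\na w}^2 \le C \bigpa{\int_{B_{2\rho}(x_0)} \ab{\na \p}^2 + \rho^{-2}\int_{B_{2\rho}(x_0)} \ab{\p}^2}$ and invoking the Caccioppoli estimate at scales $2\rho < 4\rho$ gives
\[
	g(\rho) \le \eta(\rho) \, g(4\rho), \qquad \eta(\rho) \df C \ab{B_{2\rho}(x_0) \setminus E}^{2/n} \rho^{-2},
\]
and $\eta(\rho) \to 0$ as $\rho \to 0^+$ by the density property of $x_0$.

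Finally I would iterate. For a fixed small $\rho_0$ one has $g(\rho_0 4^{-j}) \le \bigpa{\prod_{m=1}^{j} \eta(\rho_0 4^{-m})} g(\rho_0)$; given $k$, since $\eta(\rho) \to 0$ we have $\eta(\rho_0 4^{-m}) \le 4^{-k}$ for all $m$ large, so the product is $\le c_k 4^{-kj}$, hence $g(\rho_0 4^{-j}) \le c_k (\rho_0 4^{-j})^k$, and monotonicity of $g$ upgrades this to $g(\rho) \le c_k' \rho^k$ for every $k$ and all small $\rho$. Thus $\p$ vanishes to infinite order at $x_0$, which is the assertion. Everything besides the Caccioppoli absorption is routine; the integrability needed to test against $\chi^2 \p$ is guaranteed by $\p \in H^1_{\rm loc}$ together with the form bounds, or a fortiori by $\p \in H^2_{\rm loc}$ as in the applications.
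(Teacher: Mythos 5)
Your proof is correct and follows essentially the same route as the paper: the Caccioppoli estimate obtained by testing against $\chi^2\ov{\p}$, with the magnetic term handled by exactly the rewriting $-\re\ps{\chi^2\p, iA\cdot\na\p}=\re\ps{\chi\p,-iA\cdot\na(\chi\p)}$ (the difference being purely imaginary) so that the form bound applies to $w=\chi\p$, followed by the H\"older--Sobolev--Lebesgue-density iteration of Figueiredo--Gossez. The only difference is that you spell out the density-point iteration, which the paper simply delegates to \cite{FigGos92}.
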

\begin{proof}
	We take $\delta \in (0,1/2]$ and define a smooth real positive localisation function $\eta$ with support in $B_{2\delta}$, equal to $1$ in $B_{\delta}$, and such that $\abs{\na \eta} \leq c/\delta$ and $\abs{\Delta \eta} \leq c/\delta^2$. And integration by parts and the use of $2 \re \ov{\p} \na \p = \na \ab{\p}^2$ yields
	\begin{align*}
	 \re \int \eta^2 \ov{\p} \Delta \p & = - \int \ab{\eta \na \p}^2 - \int \na (\eta^2) \re \ov{\p} \na \p \\
	 & = - \int \ab{\eta \na \p}^2 + \ud \int \ab{\p}^2 \Delta (\eta^2).
 \end{align*}
Hence, multiplying Schr\"odinger's equation by $\eta^2 \ov{\p}$, taking the real parts, integrating by parts and rearranging the obtained equation yields
	\begin{align}\label{grad}
 \int \ab{\eta \na \p}^2 & = - \re  \int \eta^2 \ov{\p} i A \cdot  \na \p -\re \int V \ab{\eta \p}^2 + \ud \int \ab{\p}^2 \Delta \eta^2\nonumber  \\
		& = -  \int \eta\ov{\p} i A \cdot  \na \pa{ \eta \p}  - \int \pa{ \re V} \ab{\eta \p}^2 + \ud \int \ab{\p}^2 \Delta \eta^2\nonumber  \\
		& \le (a+a')\int \ab{\na (\eta \p)}^2 + 2c \int \ab{\eta \p}^2 + \f{1}{2} \int \ab{\p}^2 \Delta \eta^2\nonumber  \\
		& = (a+a')\hspace{-0.1cm}\int \ab{\eta \na \p}^2 + (a+a')\hspace{-0.1cm} \int \ab{\p \na \eta}^2 + \f{1-a-a'}{2} \hspace{-0.1cm}\int \ab{\p}^2 \Delta \eta^2\nonumber  \\
  & \bighs + 2c \int \ab{\eta \p}^2,
\end{align}
where we used the assumptions on the potentials. We move the first term of the right-hand-side to the left, which yields
\begin{align}\label{eses}
\int \ab{ \eta \na \p}^2 \le c \int \ab{\p}^2 \pa{ \eta^2 + \ab{\na \eta}^2 + \ab{\Delta \eta^2}} \le \f{c}{\delta^2} \int_{B_{2\delta}} \ab{\p}^2,
 \end{align}
where we used that $\supp \eta \subset B_{2\delta}$. Then we have
 \begin{align*}
	 \int_{B_\delta} \ab{\na \p}^2 = \int_{B_{\delta}} \ab{\eta \na \p}^2 \le \int \ab{\eta \na \p}^2 \le \f{c}{\delta^2} \int_{B_{2\delta}} \ab{\p}^2,
 \end{align*}
where $c$ is independent of $\delta$, and where we used that $\eta = 1$ in $B_{\delta}$. This estimate is the same statement as \cite[Lemma 1]{FigGos92}. The end of the proof is thus exactly the proof of \cite[Proposition 3]{FigGos92}. This consists in applying H\"older's and Sobolev's inequalities, so $\int_{B_{\delta}} \ab{\p}^2$ is controlled by $\int_{B_{2\delta}} \ab{\p}^2$ times a factor which is proved to be small by using Lebesgue's density theorem. Iterating this estimate yields \eqref{vanish}, that is the definition of $\p$ vanishes to infinite order at the origin.
\end{proof}
The last proof extends to Pauli operators, which Zeeman part can be put in a matrix potential.
\begin{proposition}[Figueiredo-Gossez for Pauli systems]\label{fig-gos}\tx{ }

	Let $\wt{V} \df \pa{V_{\alpha,\beta}}_{1 \le \alpha,\beta \le m}$ be a $m \times m$ matrix of potentials in $L^2_{\rm{loc}}(\er{n},\C)$ and let $\wt{A} \df \pa{A_{\alpha}}_{1 \le \alpha \le m}$ be a list of vector potentials in $L^2_{\rm{loc}}(\er{n},\R^n)$, such that for every ${R >0}$, there exists $c_R \geq 0$ such that 
\begin{align*}
-\indic_{B_R} \re V_{\alpha,\beta} & \leq \ep_{n,m} (-\Delta)  + c_R,\\
 \ps{u,-i\indic_{B_R} A_{\alpha} \cdot \na u} & \leq \ps{u,\pa{\ep_{n,m} (-\Delta) + c_R}u}, \hs\hs \forall u \in \cC^{\ii}\ind{c}(\R^n),
\end{align*}
where $\ep_{n,m}$ is a small constant depending only on the dimensions $n$ and $m$.
Let $\p \in H_{\rm{loc}}^2(\er{n},\C^{m})$ be a weak solution of the $m \times m$ system \eqref{system}, that is
\begin{align*}
 \pa{- \indic_{m \times m} \Delta_{\R^n} + i \wt{A} \cdot \na_{\R^n} + \wt{V}} \p = 0.
 \end{align*}
If $\p$ vanishes on a set of positive measure, then $\p$ has a zero of infinite order.
\end{proposition}
Without loss of generality, we can thus assume that $\p$ vanishes to infinite order at the origin. 

\subsection*{Step 2. $\na \p$ and $\Delta \p$ vanish to infinite order as well.}

As remarked in \cite[Section 2, Step 2]{Garrigue18}, if $\p \in L^2(\er{n})$, then vanishing to infinite order at the origin is equivalent to $\int_{B_1}\abs{x}^{-\tau} \abs{\p}^2 \d x$ being finite for every $\tau \geq 0$. With additional assumptions, we can show that $\na \p$ and $\Delta \p$ vanish to infinite order as well.

\begin{lemma}[Finiteness of weighted norms]\label{finite} \tx{ }

$i)$ If $\p \in H^{1+\ep}\loc(\R^n)$ with $\ep > 0$ and if $\p$ vanishes to infinite order at the origin, then $\na \p$ as well.

$ii)$ 
Let $V \in L^2_{\rm{loc}} (\er{n},\C)$ and $A \in L^2_{\rm{loc}} (\er{n},\R^n)$ be such that 
\begin{align*}
-\indic_{B_1} \re V & \leq a (-\Delta) +c, \\
	\ps{u,- i \indic_{B_1} A \cdot \na u} & \leq  \ps{u,\pa{a' (-\Delta) +c'}u}, \hs\hs \forall u \in \cC^{\ii}\ind{c}(\R^n),
\end{align*}
for some $a,a'$ such that $a+a'<1$ and $c,c' \ge0$.
Let $\p \in H^1_{\rm{loc}}(\er{n})$ satisfying $-\Delta \p + i A \cdot \na \p + V \p=0$ weakly. If $\p$ vanishes to infinite order at the origin, then $\na \p$ as well.

	$iii)$ 
	Let $V \in L^2_{\rm{loc}} (\er{n},\C)$ and $A \in L^2_{\rm{loc}} (\er{n},\C^n)$ be such that 
	\begin{align*}
		\abs{V}^2 \indic_{B_1}  \leq a (-\Delta)^2 +c, \bighs \bighs \abs{A}^2 \indic_{B_1}  \leq \ep (-\Delta) +c_{\ep},
	\end{align*}
	for some $a<1$, $c \ge 0$, for all $\ep> 0$ and some $c_{\ep} \ge 0$ depending on $\ep$.
	Let $\p \in H^2_{\rm{loc}}(\er{n})$ satisfying $-\Delta \p + i A \cdot \na \p + V \p=0$. If $\p$ vanishes to infinite order at the origin, then $\na \p$ and $\Delta \p$ as well.
\end{lemma}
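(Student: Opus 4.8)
\emph{Overall strategy and part $i)$.} In all three parts the conclusion ``$\na\p$ (resp.\ $\Delta\p$) vanishes to infinite order at $0$'' is, by the observation recalled in Step~2, equivalent to the finiteness of $\int_{B_{1/2}}\ab{x}^{-\tau}\ab{\na\p}^2$ (resp.\ $\int_{B_{1/2}}\ab{x}^{-\tau}\ab{\Delta\p}^2$) for every $\tau>0$, and that is what I would prove. For $i)$, I would split the punctured ball into dyadic annuli $A_j\df\acs{2^{-j-1}\le\ab{x}\le 2^{-j}}$, rescale each $A_j$ to a fixed reference annulus, and apply there the interpolation inequality $\nor{\na f}{L^2}\le c\nor{f}{L^2}^{\ep/(1+\ep)}\nor{f}{H^{1+\ep}}^{1/(1+\ep)}$, with a constant uniform in $j$ (all the $A_j$ being bilipschitz to the reference one). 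Feeding in $\int_{A_j}\ab{\p}^2\le\int_{\ab{x}<2^{-j}}\ab{\p}^2\le c_k 2^{-jk}$ from the infinite vanishing of $\p$ and the finite quantity $\nor{\p}{H^{1+\ep}(B_1)}$, undoing the scaling gives $\int_{A_j}\ab{\na\p}^2\le c\, 2^{-j(k\ep/(1+\ep)-2)}$; multiplying by $\sup_{A_j}\ab{x}^{-\tau}$ and summing over $j$ produces a convergent series once $k$ is taken large enough.

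\emph{Part $ii)$.} Fix $\sigma>0$, a cutoff $\eta\in C^{\ii}_{\tx{c}}(B_1)$ with $\eta\equiv 1$ on $B_{1/2}$, and the regularised weight $\rho_\lambda\df(\lambda+\ab{x}^2)^{-\sigma}$, $\lambda>0$, which is smooth and bounded and increases to $\ab{x}^{-2\sigma}$ as $\lambda\to0$. Multiplying the equation by $\eta^2\rho_\lambda\ov{\p}$, taking real parts and integrating by parts --- using $2\re\ov{\p}\na\p=\na\ab{\p}^2$ on the Laplacian term and, exactly as in the Figueiredo--Gossez computation above, the identity $\re\int(iA\cdot\na\p)\eta^2\rho_\lambda\ov{\p}=\re\int(iA\cdot\na h)\ov{h}$ with $h\df\eta\sqrt{\rho_\lambda}\,\p$ on the magnetic term --- yields
\begin{align*}
\int\eta^2\rho_\lambda\ab{\na\p}^2\le\ud\int\ab{\Delta(\eta^2\rho_\lambda)}\ab{\p}^2-\re\int(iA\cdot\na h)\ov{h}+\int(-\re V)\ab{h}^2 .
\end{align*}
Applying the two form bounds of the hypothesis with $R=1$ to $h$ bounds the last two terms by $(a+a')\no{\na h}^2+(c+c')\no{h}^2$, and $\no{\na h}^2\le(1+\ep_0)\int\eta^2\rho_\lambda\ab{\na\p}^2+c_{\ep_0}\int\ab{\na(\eta\sqrt{\rho_\lambda})}^2\ab{\p}^2$; since $a+a'<1$ one chooses $\ep_0$ with $(a+a')(1+\ep_0)<1$ and absorbs the leading term on the left (legitimate for each fixed $\lambda>0$, where $\rho_\lambda$ is bounded). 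What is left is an estimate of $\int\eta^2\rho_\lambda\ab{\na\p}^2$ by a constant times $\int_{B_1}(\ab{\Delta(\eta^2\rho_\lambda)}+\ab{\na(\eta\sqrt{\rho_\lambda})}^2+\eta^2\rho_\lambda)\ab{\p}^2$, where all the weights are $\le c_\sigma(\ab{x}^{-2\sigma-2}+1)$ uniformly in $\lambda$; hence the right-hand side is finite (infinite vanishing of $\p$) and uniform in $\lambda$, and monotone convergence as $\lambda\to0$ gives $\int_{B_{1/2}}\ab{x}^{-2\sigma}\ab{\na\p}^2<\ii$ for all $\sigma$.

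\emph{Part $iii)$.} Since $H^2\loc\subset H^{1+\ep}\loc$, part $i)$ already gives that $\na\p$ vanishes to infinite order. The hypotheses also imply (testing against $\eta^2(\cdot)$) that $A\cdot\na\p,V\p\in L^2\loc$ near $0$, so $\Delta\p=iA\cdot\na\p+V\p$ a.e.\ there, whence pointwise $\ab{\Delta\p}^2\le(1+s)\ab{A}^2\ab{\na\p}^2+(1+s^{-1})\ab{V}^2\ab{\p}^2$ for every $s>0$. Fix $\tau>0$ and $\eta$ as above, and set $f_\lambda\df(\lambda+\ab{x}^2)^{-\tau/4}\eta\p$, $F_\lambda\df(\lambda+\ab{x}^2)^{-\tau/4}\eta\na\p$, which are in $H^2$, resp.\ $H^1$, with compact support for each $\lambda>0$. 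Multiplying the pointwise inequality by $(\lambda+\ab{x}^2)^{-\tau/2}\eta^2$, integrating and invoking $\ab{V}^2\indic_{B_1}\le a(-\Delta)^2+c$ on $f_\lambda$ and $\ab{A}^2\indic_{B_1}\le\ep(-\Delta)+c_\ep$ on $F_\lambda$ bounds the (for $\lambda>0$ finite) quantity $I^{(\lambda)}\df\int(\lambda+\ab{x}^2)^{-\tau/2}\eta^2\ab{\Delta\p}^2$ by $(1+s)(\ep\no{\na F_\lambda}^2+c_\ep\no{F_\lambda}^2)+(1+s^{-1})(a\no{\Delta f_\lambda}^2+c\no{f_\lambda}^2)$. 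Expanding $\Delta f_\lambda$ and $\na F_\lambda$, isolating the leading terms and using $\no{\na^2 f_\lambda}=\no{\Delta f_\lambda}$ (valid on $H^2(\R^n)$), one obtains $\no{\Delta f_\lambda}^2\le(1+\ep_1)I^{(\lambda)}+E$ and $\no{\na F_\lambda}^2\le(1+\ep_1)^3 I^{(\lambda)}+E$, with $E$ a sum of weighted integrals of $\ab{\p}^2$ and $\ab{\na\p}^2$ only --- finite and uniform in $\lambda$ by the infinite vanishing of $\p$ and of $\na\p$. Therefore
\begin{align*}
I^{(\lambda)}\pa{1-(1+s)\ep(1+\ep_1)^3-(1+s^{-1})a(1+\ep_1)}\le(\text{finite, uniform in }\lambda) .
\end{align*}
As $a<1$, choose first $s$ large and $\ep_1$ small so that $(1+s^{-1})a(1+\ep_1)<1$, then $\ep$ small (legitimate, the bound on $\ab{A}^2$ holding for every $\ep>0$) so that the bracket is a positive constant. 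Hence $I^{(\lambda)}$ is bounded uniformly in $\lambda$, and $\lambda\to0$ gives $\int_{B_{1/2}}\ab{x}^{-\tau}\ab{\Delta\p}^2<\ii$ for all $\tau$, i.e.\ $\Delta\p$ vanishes to infinite order.

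\emph{Main obstacle.} The delicate point is the self-improving estimate in $iii)$: the integral $\int\ab{x}^{-\tau}\ab{\Delta\p}^2$ reappears on the right through $\no{\Delta f_\lambda}^2$ and $\no{\na F_\lambda}^2$ (no weighted Calder\'on--Zygmund estimate is available for arbitrarily large $\tau$), and it must be absorbed using only $a<1$ --- which forces the splitting with the free parameter $s$ together with the arbitrarily small coefficient $\ep$ in the bound on $\ab{A}^2$ --- while every constant is kept independent of the regularisation $\lambda$, so that the absorption, a priori valid only for $\lambda>0$, survives the limit $\lambda\to0$.
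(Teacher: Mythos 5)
Your proposal is correct, and the analytic core of each part coincides with the paper's: Gagliardo--Nirenberg interpolation for $i)$, a Caccioppoli-type identity with the magnetic term rewritten as $\re\ps{h,-iA\cdot\na h}$ and absorbed via $a+a'<1$ for $ii)$, and for $iii)$ the equation used pointwise, the form bounds applied to localized copies of $\p$ and $\na\p$, the identity $\no{\na^2 f}=\no{\Delta f}$ on $H^2(\R^n)$, and a two-parameter absorption exploiting $a<1$ together with the arbitrariness of $\ep$ in the bound on $\ab{A}^2$. Where you differ is in the localization device: the paper works with cutoffs $\eta$ at every small scale $\delta$ (supported in $B_{2\delta}$, with $\ab{\na\eta}\le c/\delta$, $\ab{\Delta\eta}\le c/\delta^2$), obtains $\int_{B_\delta}\ab{\na\p}^2\le c\delta^{-2}\int_{B_{2\delta}}\ab{\p}^2$ and $\int_{B_\delta}\ab{\Delta\p}^2\le c\delta^{-4}\int_{B_{2\delta}}(\ab{\p}^2+\ab{\na\p}^2)$, and reads off the conclusion directly from the definition \eqref{vanish}; you instead fix a unit-scale cutoff, insert the regularized singular weight $(\lambda+\ab{x}^2)^{-\sigma}$, and prove the equivalent statement $\int\ab{x}^{-\tau}\ab{\na\p}^2<\ii$ (resp.\ $\ab{\Delta\p}^2$) by absorption at fixed $\lambda>0$ followed by monotone convergence. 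The paper's route is lighter --- no regularization or limiting argument, the $\delta$-dependence of the constants coming for free from the scaling of $\eta$ --- whereas yours connects more directly to the weighted-norm formulation that is actually consumed in Step 4; your dyadic-annulus treatment of $i)$ is likewise a mild repackaging of the paper's single interpolation at scale $\delta$. One small point to make explicit: in $iii)$ the infinite-order vanishing of $\na\p$, which you need for the error terms $E$ to be finite, should be quoted from $i)$ (as you do) before the $\Delta\p$ estimate is run.
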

This lemma also extends to Pauli operators.

\begin{proof}
	$i)$ Take $\delta \in ]0,1/4[$, and choose a smooth real positive localization function $\eta$ equal to $1$ in $B_{\delta} \subset \er{n}$, supported in $B_{2\delta}$, and such that $0 \leq \eta \leq 1$, $\ab{\na \eta} \le c / \delta$, and $\ab{\Delta \eta} \le c / \delta^2$. 
	For any $k \in \N$, $k \ge 0$, there exists $c_k \ge 0$ such that
	\begin{align*}
		\int_{B_{\delta}} \abs{\na \p}^2 & = \int_{B_{\delta}} \abs{\na \pa{ \eta \p}}^2 \leq \int \abs{\na \pa{\eta \p}}^2 \leq c \nor{(-\Delta)^{\f{1+\ep}{2}} \pa{\eta \p}}{L^2}^{\f{1}{1+\ep}} \nor{\eta \p}{L^2}^{\f{\ep}{1+\ep}} \\
		& \le \f{c \nor{\p}{H^{1+\ep}(B_{2\delta})}^{\f{1}{1+\ep}}}{ \delta^{2}} \pa{\int_{B_{2\delta}} \ab{\p}^2}^{\f{\ep}{1+\ep}} \le \f{c}{\delta^2} \pa{c_k \pa{2\delta}^k}^{\f{\ep}{1+\ep}} = c_k' \delta^{\f{k\ep}{1+\ep}-2},
 \end{align*} 
	where we applied Gagliardo-Nirenberg's inequality in the second inequality, we used $\p \in H^{1+\ep}\loc(\R^n)$ in the following one, and we used the definition of $\p$ vanishing to infinite order at the origin \eqref{vanish} in the last inequality. We notice that our estimate is the definition of $\na \p$ vanishing to infinite order at the origin.

	$ii)$ Let $\eta$ be the same function as in $i)$, and consider the inequality \eqref{grad} again, in which we used Schr\"odinger's equation of $\p$, we obtained \eqref{eses}, and we will use it once more. Since $\eta = 1$ in $B_{\delta}$, we have
 \begin{align*}
	 \int_{B_{\delta}} \abs{\na \p}^2 = \int_{B_{\delta}} \abs{\eta \na \p}^2\leq \int \abs{\eta \na \p}^2 \leq \f{c}{ \delta^{2}} \int_{B_{2\delta}} \abs{\p}^2 \le \f{c}{\delta^2} \pa{c_k (2\delta)^k} = c_k' \delta^k,
 \end{align*}
where we used the definition of $\p$ vanishing to infinite order, and this proves that $\na \p$ vanishes to infinite order as well.

	$iii)$ We take the same funtion $\eta$ as in $i)$, adding the constraint $\ab{\partial_{ij} \eta}<c/r^2$ for any $i,j \in \acs{1,\dots,n}$, and we take $\delta \in ]0,1/4[$. We know that for any $\xi,\theta \in \R$ and any $\alpha \in ]0,+\ii[$, we have
 \begin{align*}
	 \pa{\xi + \theta}^2 \le \pa{1+ \alpha}\xi^2 + \pa{1+ \alpha\iv} \theta^2.
 \end{align*}
	So using the assumption on $V$, we have
 \begin{align*}
	 & \int \abs{V \eta \p}^2 \\
	 & \bhs \leq a \int \ab{\Delta(\eta \p)}^2 + c \int \ab{\eta \p}^2 \\
	 &  \bhs = a \int \ab{\eta \Delta \p + 2 \na \eta \cdot \na \p + \p \Delta \eta}^2 + c \int \ab{\eta \p}^2 \\
	 &  \bhs \le a(1+\alpha) \int \ab{\eta \Delta \p}^2 + \pa{1+ \inv{\alpha}} \int \ab{ 2\na \eta \cdot \na \p + \p \Delta \eta}^2 + c \int \ab{\eta \p}^2 \\
	  & \bhs \le a(1+\alpha) \int \abs{\eta \Delta \p}^2  + 2\pa{1+ \inv{\alpha}} \int \abs{\p \Delta \eta}^2\\
	 & \bhs \bhs \bhs \bhs  + 4\pa{1+ \inv{\alpha}} \int \abs{\na \p \cdot \na \eta}^2 + c\int \abs{\eta \p}^2,
 \end{align*}
	for any $\alpha >0$.
	As for the gradient term, we have
 \begin{align*}
	 \int \abs{\eta A \cdot \na \p}^2 & \le \int \ab{A}^2 \ab{ \eta \na \p}^2 \\
	 & \le \ep \int \ab{\na  \ab{\eta \na \p}}^2 + c_{\ep} \int \ab{\eta \na \p}^2 \\
	 & = \ep \int \ab{\eta\na \ab{ \na \p} + \ab{\na \p} \na \eta}^2 + c_{\ep} \int \ab{\eta \na \p}^2 \\
	 & \le 2 \ep \int \eta^2 \ab{ \na \ab{ \na \p}}^2 + 2\ep \int \ab{\na \eta}^2 \ab{\na \p}^2 + c_{\ep} \int \ab{\eta \na \p}^2.
 \end{align*}
	We denote by $\na^2 \p = \pa{\partial_{ij} \p}_{1 \le i,j \le n}$ the Hessian of $\p$, its square being $\ab{ \na^2 \p}^2 = \sum_{1 \le i,j \le n} \ab{ \partial_{ij} \p}^2$.
	Now by convexity of the map $f \mapsto \ab{\na \sqrt{f}}^2$ and then the diamagnetic inequality, we have
 \begin{align*}
	 \ab{\na \ab{\na \p}}^2 = \ab{ \na \sqrt{\sum_{i=1}^n \ab{\partial_i \p}^2}}^2 \le \sum_{i=1}^n \ab{\na \ab{\partial_i \p}}^2 \le \sum_{i=1}^n \ab{\na \partial_i \p}^2 = \ab{ \na^2 \p}^2.
 \end{align*}
	Also, 
 \begin{align*}
	 & \int \ab{ \na^2 \pa{\eta \p}}^2 = \sum_{1 \le i,j \le n} \int \ab{k_i k_j \hat{\eta \p}}^2 \\
	 & \bighs \le \ud \sum_{1 \le i,j \le n} \int \pa{\ab{k_i}^2 + \ab{ k_j}^2} \ab{\hat{\eta \p}}^2  = n \int \ab{ \Delta \pa{ \eta \p}}^2,
 \end{align*}
	therefore, denoting by $\otimes$ the tensor product on $n \times n$ matrices, and making use of previous inequalities, we obtain
 \begin{align*}
	 \int \ab{\eta \na^2 \p}^2 & = \int \ab{\na^2 \pa{\eta \p} - \p \na^2 \eta - \na \eta \otimes \na \p - \na \p \otimes \na \eta}^2 \\
	 & \le 4 \int \ab{ \na^2 \pa{\eta \p}}^2 + 4 \int \ab{\p \na^2 \eta}^2 + 8 \int \ab{\na \eta \otimes \na \p}^2 \\
	 & \le 4n \int \ab{ \Delta \pa{\eta \p}}^2 + 4 \int \ab{\p \na^2 \eta}^2 + 4 n^2 \int \ab{\na \eta}^2 \ab{\na \p}^2 \\
	 & \le 4n \int \ab{ \eta \Delta \p}^2 +4n \int \ab{ \p \Delta \eta}^2+ 4 \int \ab{\p \na^2 \eta}^2 \\
	 & \bighs\bighs \bighs \bighs \bighs  + 4 n(n+2) \int \ab{\na \eta}^2 \ab{\na \p}^2.
 \end{align*}
	We use Schr\"odinger's equation pointwise and gather our previous inequalities. We get, for any $\alpha, \beta > 0$,
 \begin{align*}
	 \int &\abs{\eta \Delta \p}^2 = \int \ab{\eta V \p + i \eta A \cdot \na \p}^2 \\
	 & \le (1+ \beta) \int \abs{V \eta \p}^2 + \pa{1+ \inv{\beta}}\int \abs{\eta A \cdot \na \p}^2 \\
	  & \le  \pa{a(1+\beta)(1+\alpha) + 8\ep n \pa{1+ \inv{\beta}}} \int \ab{\eta \Delta \p}^2 \\
	 & + \pa{ 2\pa{1+ \inv{\alpha}}(1+\beta) + 8 \ep n \pa{1+ \inv{\beta}}} \int \ab{\p \Delta \eta}^2 \\
	 & + \pa{ 4 \pa{1+ \inv{\alpha}}(1+\beta) + 2 \ep \pa{4n(n+2) +1} \pa{1+ \inv{\beta}}} \int \ab{\na \eta}^2 \ab{\na \p}^2 \\
	 & + (1+\beta)c \int \ab{\eta \p}^2 + c_{\ep} \pa{1+ \inv{\beta}}\int \ab{\eta \na \p}^2 + 8 \ep \pa{1+ \inv{\beta}} \int \ab{\p \na^2 \eta}^2.
 \end{align*}
 We take $\alpha,\beta $ and $\ep$ such that $a(1+\beta)(1+\alpha) + 8\ep n \pa{1+ \beta\iv}  < 1$. This allows us to move the term $\int \abs{\eta \Delta \p}^2$ to the left and obtain
 \begin{align*}
	 \int_{B_{\delta}} \abs{\Delta \p}^2 \leq \int \abs{\eta \Delta \p}^2 \leq \f{c}{\delta^4} \int_{B_{2\delta}} \pa{\abs{\p}^2+\abs{\na \p}^2} \le \f{c}{\delta^4} \pa{c_k (2\delta)^k} = c_k' \delta^{k-4}.
 \end{align*}
	This proves that $\Delta \p$ vanishes to infinite order at the origin, by the definition \eqref{vanish}.
\end{proof}

\subsection*{Step 4. Proof that $\p =0$.}

We consider some number $\tau\geq 0$ (large), and we call $c$ any constant which does not depend on $\tau$. We take a smooth localisation function $\eta$, equal to $1$ in $B_{1/2} \subset \er{n}$, supported in $B_{1}$, and such that $0 \leq \eta \leq 1$. We take the same weight function $\phi$ as in Theorem~\ref{carl}. Thanks to Step 3, all the expressions we write are finite. 
We define $\kappa_{\delta,n} \df \kappa_n \delta^{-3}$. We start by controling the gradient term by using the assumption on $\tilde{A}$,
 \begin{align*}
	 \nor{e^{\tau \phi}  \wt{A} \cdot \na \pa{\eta \p}}{L^2(B_1)}^2 &  = \sum_{\alpha=1}^m \nor{e^{\tau \phi} \sum_{i=1}^n A^i_{\alpha} \partial_i \pa{ \eta \p_{\alpha}}}{L^2(B_1)}^2 \\
	 & \le n  \sum_{\substack{1 \le \alpha \le m \\ 1 \le i \le n}} \nor{e^{\tau \phi}  A^i_{\alpha} \partial_i \pa{ \eta \p_{\alpha}}}{L^2(B_1)}^2  \\
	 & \le n m \ep_{n,m,\delta}\sum_{\substack{1 \le \alpha \le m \\ 1 \le i \le n}}  \nor{ (-\Delta)^{\f{1}{4} - \delta} \pa{ e^{\tau \phi}  \partial_i \pa{ \eta \p_{\alpha}}}}{L^2(B_1)}^2 \\
	 & \hs\hs\hs\bhs  + n m c \sum_{\substack{1 \le \alpha \le m \\ 1 \le i \le n}} \nor{e^{\tau \phi} \partial_i \pa{ \eta \p_{\alpha}}}{L^2(B_1)}^2.
 \end{align*}
We now use the fractional Carleman inequality \eqref{fraccarl} with $s'=1/4$ and $s'=0$, this yields
 \begin{align*}
	 & \nor{e^{\tau \phi}  \wt{A} \cdot \na \pa{\eta \p}}{L^2(B_1)}^2 \\
	 & \bhs \le  \kappa_{n} n m^2 \pa{\ep_{n,m,\delta}\pa{\delta/4}^{-\f{5}{2}} + \tau^{-1} c } \sum_{\alpha=1}^m  \nor{  e^{\tau \phi}  \Delta \pa{ \eta \p_{\alpha}}}{L^2(B_1)}^2  \\
	 & \bhs =  \kappa_{n} n m^2 \pa{\ep_{n,m,\delta}\pa{\delta/4}^{-\f{5}{2}} + \tau^{-1} c } \nor{  e^{\tau \phi}  \Delta \pa{ \eta \p}}{L^2(B_1)}^2.
 \end{align*}
  Similarly, for the multiplication potential $\tilde{V}$, we begin by using the assumption \eqref{hyps} and we get
 \begin{align*}
	 & \nor{e^{\tau \phi} \eta \wt{V} \p}{L^2(B_1)}^2 \\
	 & \bhs = \sum_{\alpha=1}^m \nor{e^{\tau \phi} \eta \sum_{\beta=1}^m V_{\alpha\beta} \p_\beta}{L^2(B_1)}^2 \\
	  &  \bhs \le m \sum_{\substack{1 \le \alpha, \beta \le m}} \nor{e^{\tau \phi} \eta V_{\alpha\beta} \p_\beta}{L^2(B_1)}^2 \\
	  & \bhs \le  m \sum_{\beta=1}^m \pa{\ep_{n,m,\delta} \nor{ (-\Delta)^{\f{3}{4}-\delta}  \pa{e^{\tau \phi} \eta \p_\beta}}{L^2(B_1)}^2 + c \nor{ e^{\tau \phi} \eta \p_{\beta} }{L^2(B_1)}^2 }.
 \end{align*}
We proceed by using our fractional Carleman inequality of Corollary~\ref{fraccarl}, yielding
 \begin{align*}
	 \nor{e^{\tau \phi} \eta \wt{V} \p}{L^2(B_1)}^2  & \le  \kappa_n m \pa{ \ep_{n,m,\delta}(3\delta/4)^{-\f{5}{2}} + \tau^{-3} c } \sum_{\beta=1}^m  \nor{ e^{\tau \phi} \Delta \pa{\eta \p_{\beta}}}{L^2(B_1)}^2  \\
	 &  =  \kappa_n m \pa{ \ep_{n,m,\delta}(3\delta/4)^{-\f{5}{2}} + \tau^{-3} c }  \nor{ e^{\tau \phi} \Delta \pa{\eta \p}}{L^2(B_1)}^2.
 \end{align*}
 We can now estimate
 \begin{align*}
	  \nor{ e^{\tau \phi} \eta \Delta \p }{L^2(B_1)}^2 &  = \sum_{\alpha=1}^m \nor{e^{\tau \phi} \eta \Delta \p_{\alpha}}{L^2(B_1)}^2 \\
	 &  \le 2 \nor{e^{\tau \phi} \eta \wt{A} \cdot \na \p}{L^2(B_1)}^2 + 2 \nor{e^{\tau \phi} \eta \wt{V} \p}{L^2(B_1)}^2 \\
	 &  = 2 \nor{e^{\tau \phi} \wt{A} \cdot \pa{\na (\eta \p) - \p \na \eta}}{L^2(B_1)}^2 + 2 \nor{e^{\tau \phi} \eta \wt{V} \p}{L^2(B_1)}^2 \\
	 & \le 4 \nor{e^{\tau \phi} \wt{A} \cdot \na \pa{ \eta \p}}{L^2(B_1)}^2 + 4 \nor{e^{\tau \phi} \p \wt{A} \cdot \na \eta}{L^2(B_1)}^2 \\
	 & \bighs +2 \nor{e^{\tau \phi} \eta \wt{V} \p}{L^2(B_1)}^2 \\
	 &  \le 6\kappa_{n} n m^2 \pa{\ep_{n,m,\delta}\pa{\delta/4}^{-\f{5}{2}} + \tau^{-1} c }\nor{  e^{\tau \phi}  \Delta \pa{ \eta \p}}{L^2(B_1)}^2 \\
	 &  \bighs + 4 \nor{e^{\tau \phi} \p \wt{A} \cdot \na \eta}{L^2(B_1)}^2 \\
	 &  = c_{\ep,\delta,\tau}^2 \nor{  e^{\tau \phi}  \Delta \pa{ \eta \p}}{L^2(B_1)}^2 + 4 \nor{e^{\tau \phi} \p \wt{A} \cdot \na \eta}{L^2(B_1)}^2,
 \end{align*}
 where
 \begin{align*}
	 c_{\ep,\delta,\tau}^2 \df 6\kappa_{n} n m^2 \pa{\ep_{n,m,\delta}\pa{\f{4}{\delta}}^{\f{5}{2}} + \f{c}{\tau}}.
 \end{align*}
Eventually, the last inequality yields
 \begin{align*}
	 & \nor{ e^{\tau \phi}  \Delta \pa{\eta \p}}{L^2(B_1)} \nonumber \\
	 &  \bhs \le  \nor{ e^{\tau \phi} \eta  \Delta \p}{L^2(B_1)} + 2 \nor{ e^{\tau \phi} \na \eta \cdot \na \p}{L^2(B_1)} + \nor{ e^{\tau \phi} \p \Delta \eta }{L^2(B_1)}\nonumber \\
	 &  \bhs\le  c_{\ep,\delta,\tau}\nor{  e^{\tau \phi}  \Delta \pa{ \eta \p}}{L^2(B_1)} + 2 \nor{e^{\tau \phi} \p \wt{A} \cdot \na \eta}{L^2(B_1)} \nonumber\\
	 & \bighs\bighs \bhs + 2 \nor{ e^{\tau \phi} \na \eta \cdot \na \p}{L^2(B_1)} + \nor{ e^{\tau \phi} \p \Delta \eta }{L^2(B_1)}.
 \end{align*}
 Now we recall that $\na \eta$ and $\Delta \eta$ are supported in $B_1 \backslash B_{1/2}$ and that they are bounded by a constant independent of $\tau, \ep, \delta$, hence
 \begin{align}\label{est}
	 & \nor{ e^{\tau \phi}  \Delta \pa{\eta \p}}{L^2(B_1)} \nonumber \\
	 &  \bhs\le  c_{\ep,\delta,\tau} \nor{  e^{\tau \phi}  \Delta \pa{ \eta \p}}{L^2(B_1)} + c \nor{e^{\tau \phi} \p \wt{A}  }{L^2\pa{B_1 \backslash B_{1/2}}} \\
	 &  \bighs\bighs \bhs + c \nor{ e^{\tau \phi} \na \p}{L^2\pa{B_1 \backslash B_{1/2}}} + c \nor{ e^{\tau \phi} \p  }{L^2\pa{B_1 \backslash B_{1/2}}}\nonumber \\
	 &  \bhs\le  c_{\ep,\delta,\tau} \nor{  e^{\tau \phi}  \Delta \pa{ \eta \p}}{L^2(B_1)} + c e^{\tau \phi\pa{\ud}}\nor{ \p \wt{A}  }{L^2\pa{B_1 \backslash B_{1/2}}} \\
	 &  \bighs\bighs \bhs + c e^{\tau \phi\pa{\ud}}\nor{  \na \p}{L^2\pa{B_1 \backslash B_{1/2}}} + c e^{\tau \phi\pa{\ud}}\nor{ \p  }{L^2\pa{B_1 \backslash B_{1/2}}}\nonumber \\
	 &  \bhs\le  c_{\ep,\delta,\tau}\nor{  e^{\tau \phi}  \Delta \pa{ \eta \p}}{L^2(B_1)} + c e^{\tau \phi\pa{\ud}},
 \end{align}
 where $c$ does not depend on $\delta, \tau$ or $\ep$, and where we used that $\phi$ is decreasing. We recall that $\delta$ is fixed, and can be taken as small as we want. The constant $\ep_{n,m,\delta}$ needs to be small enough so that $6 \kappa_{n} n m^2 (4/\delta)^{5/2}\ep_{n,m,\delta} <1$. Then $\tau$ needs to be large enough so that $c_{\ep,\delta,\tau} <1$. Then we move the first term of the right hand side of \eqref{est} to the left and get
\begin{align}\label{dern}
\nor{ e^{\tau \phi}  \Delta \pa{\eta \p}}{L^2(B_1)} \le c e^{\tau \phi\pa{\ud}}.
\end{align}
Finally, using our Carleman inequality once more, and because $\phi$ is decreasing, we find
\begin{align*}
	\nor{\p}{L^2(B_{1/2})}& \le \nor{e^{\tau\pa{\phi(\cdot)-\phi\pa{\ud}}} \p}{L^2(B_{1/2})} \\
	& = \nor{e^{\tau\pa{\phi(\cdot)-\phi\pa{\ud}}} \eta \p}{L^2(B_{1/2})} \\
	& \le \nor{e^{\tau\pa{\phi(\cdot)-\phi\pa{\ud}}} \eta\p}{L^2(B_{1})}\\
	& \le c \sqrt{\kappa_n} \tau^{-\f{3}{2}} \nor{e^{\tau\pa{\phi(\cdot)-\phi\pa{\ud}}} \Delta \pa{\eta\p}}{L^2(B_{1})} \\
	& \le c \tau^{-\frac{3}{2}},
\end{align*}
where we used \eqref{dern} is the last step. Letting $\tau \ra + \ii$ proves that $\p = 0$ in $B_{1/2}$. We can propagate this information by a well known argument, see for instance the proof of \cite[Theorem XIII.63]{ReeSim4}. This concludes the proof of Theorem \ref{sucp}. \qed

\section*{Appendix}
The results presented in this appendix are very classical~\cite{Kato,ReeSim2,LieLos01}, and we recall them for completeness. In the proof of \cite[Corollary 1.2]{Garrigue18}, we already recalled how to show that if $v,w \in L^{\f{qd}{2s}}\loc(\R^d)$, then for any $r > 0$ and any $\ep > 0$, there is $c_{\ep,r} \ge 0$ such that 
 \begin{align*}
\ab{v}^q \indic_{B_r} + \ab{w}^q \indic_{B_r} \le \ep (-\Delta)^s + c_{\ep,r} \bhs \tx{in } \R^d,
\end{align*}
in the sense of forms and that this implied that for any $R > 0$ and any $\ep > 0$ there is $c_{\ep,R} \ge 0$ such that
\begin{align*}
\indic_{B_R} \ab{ \sum_{\ell=1}^N v(x_{\ell}) + \sum_{1 \le \ell \sle j \le N} w(x_{\ell}-x_j)}^q \le \ep (-\Delta)^s + c_{\ep,R}\bhs \tx{in } \R^{dN},
\end{align*}
in the sense of forms at the level of many-body operators. Here we present a similar proof in the case of gradient operators, this is the subject of the textbook problem \cite[Problem 37 p343]{ReeSim2}. In some assumptions on magnetic potentials of this document, we could not use the notation $i \indic_{B} A \cdot \na \le \ep (-\Delta) + c$ because $i \indic_{B} A \cdot \na $ is not a symmetric operator. But we show that our assumptions of type \eqref{fifi} are verified when $A \in L^d\loc(\R^d)$.
 \begin{lemma}
 Let $A \in L^d\loc(\R^d,\R^d)$, then for any $\ep > 0$ and any $r \ge 0$, there exists $c_{\ep,r} \ge 0$ such that 
\begin{align}\label{fifi}
	\ab{\ps{u,-i  \indic_{B_r}  A \cdot \na u}}  \le \ps{u,\pa{\ep(-\Delta) + c_{\ep,r}}u} \hs\hs\hs\hs \forall u \in \cC^{\ii}\ind{c}(\R^d).
\end{align}
Moreover, for any $\ep > 0$ and any $R \ge 0$, there exists $c_{\ep,R} \ge 0$ such that
 \begin{align*}
	 \ab{\ps{u,-i\indic_{B_R} \sum_{\ell=1}^N   A(x_{\ell}) \cdot \na_{\ell} u}} \le \ps{u,\pa{\ep(-\Delta) + c_{\ep,R} }u}, \hs\hs\hs\hs\hs \forall u \in \cC^{\ii}\ind{c}(\R^{dN}).
\end{align*}
 \end{lemma}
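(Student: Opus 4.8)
The plan is to reduce both inequalities to the $(-\Delta)$-form-boundedness of potentials in $L^{d/2}\loc(\R^d)$ — the estimate recalled at the beginning of this appendix — applied to $\ab{A}^2$. For the first inequality I would start from the pointwise bound
\[
	\ab{\ps{u,-i\indic_{B_r}A\cdot\na u}} \le \int_{B_r}\ab{A}\,\ab{u}\,\ab{\na u}, \qquad u\in\cC^{\ii}\ind{c}(\R^d),
\]
which already disposes of the non-symmetry of $-i\indic_{B_r}A\cdot\na$, since only the modulus of the quadratic form is being controlled. By Young's inequality, for any $\eta>0$,
\[
	\ab{\ps{u,-i\indic_{B_r}A\cdot\na u}} \le \eta\nor{\na u}{L^2(\R^d)}^2 + \f{1}{4\eta}\int_{\R^d}\ab{A}^2\indic_{B_r}\ab{u}^2.
\]
Since $A\in L^d\loc(\R^d)$ and $B_r$ is bounded, $\ab{A}^2\indic_{B_r}\in L^{d/2}(\R^d)$, so the recalled form bound (it is the statement in the proof of \cite[Corollary 1.2]{Garrigue18} with $q=2$, $s=1$, applied to the scalar potential $\ab{A}$; see also \cite[Problem 37 p343]{ReeSim2}) provides, for every $\ep'>0$, a constant $c_{\ep'}$ with $\int_{\R^d}\ab{A}^2\indic_{B_r}\ab{u}^2\le\ep'\nor{\na u}{L^2}^2+c_{\ep'}\nor{u}{L^2}^2$ — this being the usual truncation plus Sobolev embedding $H^1\hookrightarrow L^{2d/(d-2)}$ argument for $d\ge3$, and the analogous lower-dimensional estimates otherwise. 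Substituting and choosing first $\eta$, then $\ep'$, small enough that $\eta+\ep'/(4\eta)\le\ep$ gives \eqref{fifi}.

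For the second inequality I would run the same Young step in each variable separately and then transport the one-body bound to each particle. For $u\in\cC^{\ii}\ind{c}(\R^{dN})$ and $\eta>0$,
\[
	\ab{\ps{u,-i\indic_{B_R}\sum_{\ell=1}^N A(x_\ell)\cdot\na_\ell u}} \le \sum_{\ell=1}^N\left(\eta\nor{\na_\ell u}{L^2(\R^{dN})}^2 + \f{1}{4\eta}\int_{\R^{dN}}\ab{A(x_\ell)}^2\indic_{B_R}\ab{u}^2\right).
\]
The key point is that if $(x_1,\dots,x_N)\in B_R\subset\R^{dN}$ then $\ab{x_\ell}\le R$ for every $\ell$, hence $\indic_{B_R}(x)\le\indic_{\{\ab{x_\ell}\le R\}}$ and so $\ab{A(x_\ell)}^2\indic_{B_R}(x)\le\bigpa{\ab{A}^2\indic_{B_R^{(d)}}}(x_\ell)$, where $B_R^{(d)}$ is the ball of radius $R$ in $\R^d$. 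Freezing the variables $x_j$ with $j\ne\ell$ and applying the one-body form bound just established in the single variable $x_\ell$, then integrating over the remaining variables by Fubini, one gets $\int_{\R^{dN}}\ab{A(x_\ell)}^2\indic_{B_R}\ab{u}^2\le\ep'\nor{\na_\ell u}{L^2(\R^{dN})}^2+c_{\ep'}\nor{u}{L^2(\R^{dN})}^2$. Summing over $\ell$ and using $\sum_{\ell=1}^N\nor{\na_\ell u}{L^2}^2=\ps{u,-\Delta u}$ on $\R^{dN}$ produces the bound with coefficient $\eta+\ep'/(4\eta)$ in front of $-\Delta$ and $Nc_{\ep'}/(4\eta)$ in front of $\nor{u}{L^2}^2$; optimizing over $\eta$ and $\ep'$ concludes.

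All the steps are routine. The only point requiring genuine care is the dimension-dependent auxiliary fact that $L^{d/2}$ potentials are infinitesimally $(-\Delta)$-form-bounded — but this is exactly the standard estimate already recalled in this appendix, so it may simply be invoked — together with the elementary reduction $\indic_{B_R}\le\indic_{\{\ab{x_\ell}\le R\}}$, which is what allows the one-body estimate to be lifted slot by slot to the $N$-body space. Everything else is Young's inequality and Fubini.
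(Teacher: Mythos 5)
Your proof is correct and follows essentially the same route as the paper: both reduce the cross term to H\"older--Sobolev form bounds, you by first applying Cauchy--Schwarz to land on the $L^{d/2}$ potential $\ab{A}^2\indic_{B_r}$ and invoking the form bound already recalled in the appendix, the paper by truncating $A$ at a height $M$ and using that $\nor{A\indic_{\ab{A}>M}}{L^d(B_r)}\to 0$ --- two equivalent packagings of the same standard argument. Your slot-by-slot lifting to $\R^{dN}$ via $\indic_{B_R}(x)\le\indic_{\{\ab{x_\ell}\le R\}}$ and Fubini is exactly what the paper leaves implicit in ``use the previous result on each of the $N$ components and sum.''
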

 \begin{proof}
 Let $d \ge 3$ and $r > 1$. Take $ M > 0$ and let us decompose $A = A \indic_{\ab{A} \ge M} + A \indic_{\ab{A} \sle M}$. For any $u \in \cC^{\ii}\ind{c}(\R^d)$, we have
 \begin{align*}
 & \ab{\ps{u,-i\indic_{B_r} A\cdot \na u}} \le \int \ab{A\indic_{B_{r}}} \ab{ u \na u} \\
 & \bhs \le \int_{ \acs{\ab{A} > M} \cap B_{r} } \ab{A} \ab{u \na u} + M \int  \ab{u \na u} \\
 & \bhs \le \nor{ A \indic_{\ab{A} > M}}{L^d(B_{r})} \nor{ u \na u}{L^{\f{d}{d-1}}} + M^{-1} \int \ab{ \na u}^2 + M^3 \int \ab{u}^2\\
	 & \bhs \le  \nor{ A \indic_{\ab{A} > M}}{L^d(B_{r})} \nor{u}{L^{\f{2d}{d-2}}}\nor{\na u}{L^2} + M^{-1} \int \ab{\na u}^2 + M^{3} \int \ab{ u}^2  \\
	 & \bhs \le \pa{c \nor{ A \indic_{\ab{A} > M}}{L^d(B_{r})}  + M^{-1}} \nor{\na u}{L^2}^2 + M^{3} \int \ab{ u}^2,
 \end{align*}
 where we used H\"older's and Sobolev's inequalities, and where the coefficients are independent of $u$. By dominated convergence, $\nor{A \indic_{\ab{A} > M}}{L^d} \ra 0$ when $M \ra +\ii$, so this proves \eqref{fifi}. For the $N$-body case, we use the previous result on each $N$ components and sum. For $d \in \acs{1,2}$, we have the subcritical Sobolev injections and the argument is the same.
\end{proof}

\bibliographystyle{siam}
\bibliography{biblio}
\end{document}